\documentclass[journal]{IEEEtran}
%

\usepackage[utf8x]{inputenc} 
\usepackage{amsfonts}
\usepackage{amsmath}
\usepackage{amssymb}
\usepackage{enumerate}
\usepackage{flafter}
\usepackage{fleqn}
\usepackage{latexsym}
\usepackage{mathrsfs}
\usepackage{psfrag}
\usepackage{graphicx}
\usepackage{subfigure}
\usepackage{url}
 \usepackage{color}
\usepackage{times}
\usepackage{flushend}
\usepackage{bm}

\usepackage{multirow}
\usepackage{multicol}
\usepackage{arydshln} 
\usepackage{diagbox}
\usepackage[colorlinks,citecolor=blue]{hyperref}

\newtheorem{myass}{Assumption}
\newtheorem{mydef}{Definition}
\newtheorem{mylem}{Lemma}
\newtheorem{myrem}{Remark}
\newtheorem{mythm}{Theorem}

\date{ }
\makeatother

\long\def\begincomment#1\endcomment{}

%
\ifCLASSINFOpdf
\else
\fi

\hyphenation{op-tical net-works semi-conduc-tor}

\begin{document}
%
\title{Practical Fractional-Order Variable-Gain Super-Twisting Control with Application to Wafer Stages of Photolithography Systems}
%
%
%

\author{Zhian~Kuang,~\IEEEmembership{Student Member,}
        Liting~Sun,
        Huijun~Gao,~\IEEEmembership{Fellow,~IEEE,}
        Masayoshi~Tomizuka,~\IEEEmembership{Life~Fellow,~IEEE}
\thanks{Zhian Kuang and Huijun Gao are with the Research Institute of Intelligent Control and Systems, Harbin Institute of Technology, Harbin,150001, China. Huijun Gao is also with the State Key Laboratory of Robotics and System, Harbin Institute of  Technology, Harbin 150001, China.
}
\thanks{Zhian Kuang, Liting Sun and Masayoshi Tomizuka are with the Mechanical Control System Lab, Mechanical Engineering Department, University of California, Berkeley, CA 94720, USA.

Corresponding Author: Huijun Gao(e-mail: hjgao@hit.edu.cn).
}
}

\markboth{Journal of \LaTeX\ Class Files,~Vol.~14, No.~8, August~2015}%
{Shell \MakeLowercase{\textit{et al.}}: Bare Demo of IEEEtran.cls for IEEE Journals}




\maketitle

\begin{abstract}
In this paper, a practical fractional-order variable-gain super-twisting algorithm (PFVSTA) is proposed to improve the tracking performance of wafer stages for semiconductor manufacturing. Based on the sliding mode control (SMC), the proposed PFVSTA enhances the tracking performance from three aspects: 1) alleviating the chattering phenomenon via super-twisting algorithm and a novel fractional-order sliding surface~(FSS) design, 2) improving the dynamics of states on the sliding surface with fast response and small overshoots via the designed novel FSS and 3) compensating for disturbances via variable-gain control law. Based on practical conditions, this paper analyzes the stability of the controller and illustrates the theoretical principle to compensate for the uncertainties caused by accelerations. Moreover, numerical simulations prove the effectiveness of the proposed sliding surface and control scheme, and they are in agreement with the theoretical analysis. Finally, practice-based comparative experiments are conducted. The results show that the proposed PFVSTA can achieve much better tracking performance than the conventional methods from various perspectives.
\end{abstract}

\begin{IEEEkeywords}
fractional-order, variable-gain, super-twisting control, sliding mode control, wafer stage.
\end{IEEEkeywords}

%
\IEEEpeerreviewmaketitle

\section{Introduction}
%
%
%
%
\IEEEPARstart{S}{emiconductor} manufacturing involves many precision devices, one of which is the wafer scanner~\cite{li2019kalman}. A wafer scanner is an optomechanical device used to finish the photolithography task, the principle of which is illustrated in Fig.~\ref{fig:wafer_scanner}~\cite{mishra2007precision}. The reticle, which contains the integrated circuit (IC) patterns on it, is placed onto the reticle stage, and the wafer stage carries the wafer. A fixed laser beam is generated by the illumination system. As the reticle stage and the wafer stage move, the reticle's IC patterns will be scanned and projected onto the wafer via lenses. To guarantee high yielding rate and quality, manufacturers require that both the reticle stage and the wafer stage track the designed trajectories in a fast and precise manner, i.e., the overlay error should be within a few nanometers~\cite{stearns2011iterative}.

\begin{figure}[http]
  \centering
   \includegraphics[width=140pt]{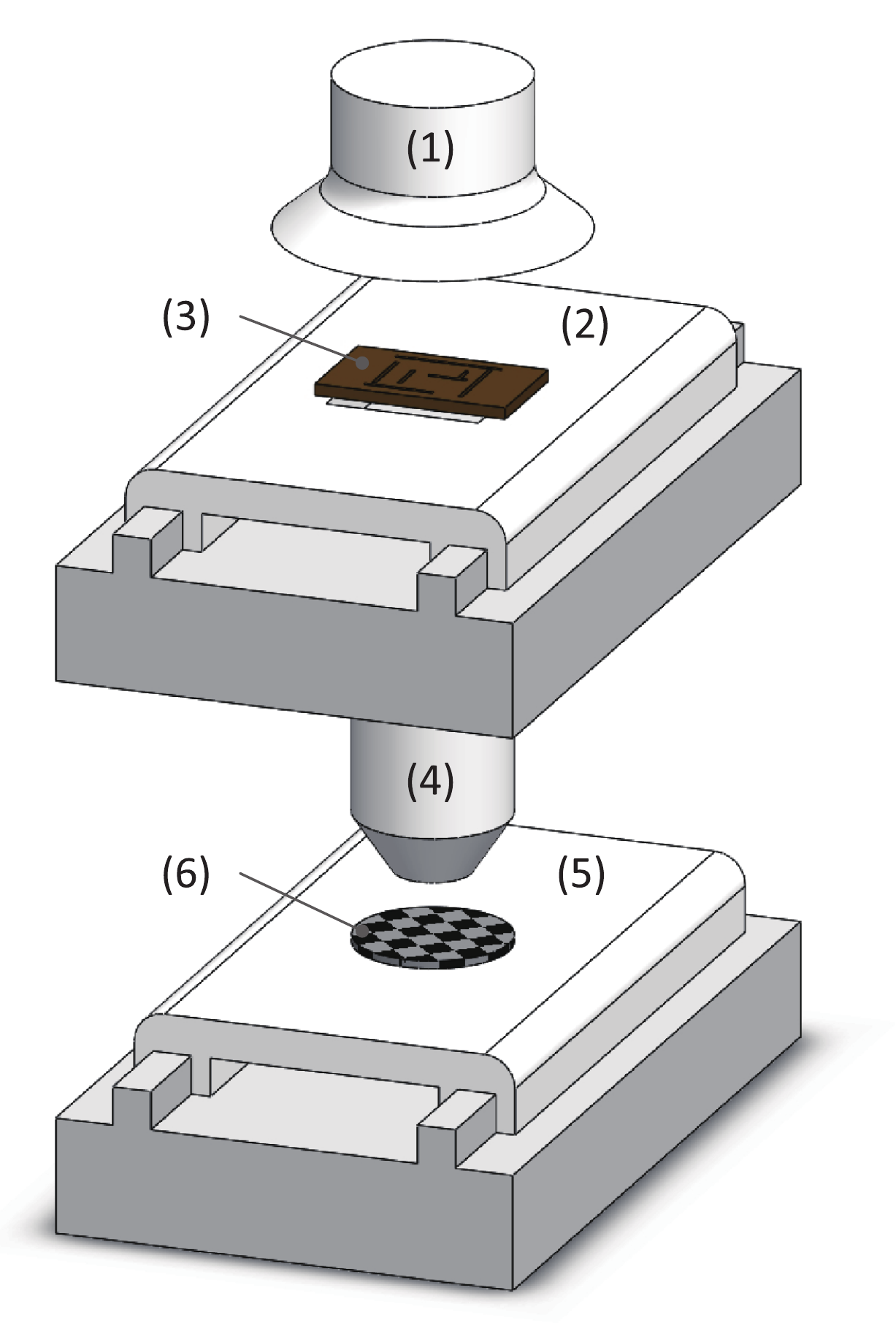}
   \caption{Schematic illustration of a photolithography system, which includes (1)~an illumination system, (2)~a reticle scanning stage, (3)~a reticle, (4)~projection lens, (5)~a wafer stage, (6)~a wafer.}
   \label{fig:wafer_scanner}
   \end{figure}


Practically, it is not trivial to achieve such high-precise tracking performance, because of the various disturbances and the inherent movements of wafer stages. For example, disturbances like external vibrations, force ripples, cable compliance, and structural vibrations can all occur in the wafer stage system~\cite{wang2015modified}. Some of these disturbances are even state-dependent, such as those caused by cables, which makes the application environment more complex~\cite{li2016state}. Moreover, the wafer stage's dynamic behavior is also position-dependent \cite{li2016state,wassink2005lpv}. As the movement of wafer stages contains both rapid acceleration phases and steady precise scanning phases, how to guarantee the performance for both phases, as well as how to balance the disturbance rejection and measurement noise sensitivity, is also a tricky problem to be solved~\cite{heertjes2012self}.

In the practical control of wafer stages, engineers usually apply both feedforward control and feedback control method to obtain excellent performance. Some researchers focus on the improvement of feedforward control techniques, such as the widely used iterative learning control (ILC) \cite{boeren2016frequency, zheng2017design, mishra2009projection, zhu2019internal}. The ILC method requires no plant information and is easy to be implemented, but the performance will be degraded when non-repetitive disturbances exist, such as cable forces and structure vibrations~\cite{wang2015modified}. The other researchers focus on the feedback control methods, of which the most extensively utilized one is PID control~\cite{butler2011position}. However, the PID controller and other linear control methods like $H \infty$ control can not handle the "water-bed effect" properly~\cite{heertjes2012self, hunnekens2014synthesis}. Therefore, researchers proposed the variable-gain PID controller to fix this problem~\cite{lee2009two,li2015data}. Nevertheless, due to the position-dependent dynamics and disturbances in the wafer stage, the actual performance achieved by these PID control methods is limited~\cite{wang2015modified,li2016state}. 
Recently, the sliding mode control (SMC) is implemented to wafer stages by researchers inspired by its effectiveness in improving the tracking performance in the presence of uncertainties and disturbances \cite{wang2015modified, li2016state, utkin2009sliding, kuang2019precise}. Results have shown that SMC can guarantee the robustness to both position-dependent disturbances and non-repetitive disturbances in the wafer stage system~\cite{wang2015modified, li2016state}.
However, traditional SMC implementations suffer from the well-known chattering problem due to its variable structure~\cite{du2016chattering, kuang2018contouring}
. This shortcoming might significantly deteriorate the achievable performance of the SMC controller~\cite{du2016chattering}.

To solve the chattering problem, Levant proposed the super-twisting algorithm (STA) in~\cite{levant1993sliding}, and STA turns out to be one of the most effective method~\cite{moreno2012strict, evangelista2012lyapunov}. In the traditional STA, the gains in the switching control law are constants, denoted as constant-gain super twisting algorithm~(CGSTA) in this paper. This kind of controller can reduce the chattering phenomenon but can only deal with the time-invariant and state-independent disturbances and uncertainties~\cite{gonzalez2012variable,bera2015variable}, which also suffers from the water-bed effect. Based on STA, a variable-gain super-twisting algorithm (VGSTA) was proposed in~\cite{gonzalez2012variable} to further enhance the performance in terms of introducing variable parameters in STA. This method provides sufficient compensation for uncertainties and disturbances, which can promote the robustness of the system. However, the boundary functions of the uncertainties/disturbances in this method are hard to know, which brings obstacles to design the controller in the practical applications on wafer stages.

Moreover, a common drawback of the two kinds of STA above is the slow responses due to the use of linear sliding surfaces. 
To improve the performance further, researchers have proposed some advanced sliding surfaces to develop STA further. For instance, 
Based on the integral sliding surface~(ISS), the integral super-twisting algorithm~(ISTA) is derived in~\cite{kurkccu2018disturbance}. This control algorithm guarantees the fast response of the states on the sliding surface but has a rather significant overshoot. The fractional-order sliding surface~(FSS) has emerged as a useful control scheme towards this issue~\cite{kuang2020precise, monje2010fractional, efe2010fractional, ma2020fuzzy}. As a generalization of the integer-order sliding surface, it can simultaneously guarantee fast response and small overshoot of the dynamics~\cite{kuang2018, sun2018practical}. Some researches also show that FSS can help to reduce the chattering phenomenon further~\cite{zhang2012fractional}. Due to these merits, it has been used in various mechatronic systems with excellent performance~\cite{ISI:000464942600017, wang2016practical, kuang2018}, and a new fractional-order constant-gain super-twisting algorithm~(FCGSTA) is developed based on this~\cite{wang2019adaptive}.  However, as this controller only has constant gains, it cannot thoroughly compensate for the uncertainties and disturbances. Moreover, the conventional design of the FCGSTA consists of too many hyper-parameters, which are hard to tune in the application to wafer stages. 

In ultra-precision motion control tasks on wafer stages, the controller needs to satisfy: 
\begin{itemize}
    \item fast response to track the reference signal;
    \item robust performance in the presence of disturbances and uncertainties;
    \item high performance in different phases;
    \item practically easy to design.
\end{itemize}
Motivated by such goals, a practical fractional-order variable-gain super-twisting algorithm (PFVSTA) is proposed in this paper. The proposed method's sliding surface is designed with the term of fractional-order (FO) calculus, which can guarantee the states reach the equilibrium points with fast response and small overshoot. Moreover, variable gains are introduced in the controller so that the uncertainties are handled properly. Specifically, the wafer stage's acceleration is considered, and the controller can get better performance than traditional methods. Finally, the controller is developed based on the practical model and is easy to implement. With all these merits, different disturbances and uncertainties, as well as the water-bed effect in the wafer stage system, can be handled appropriately by the proposed method. 

This paper's contribution is four-fold: 1) A specially designed fractional-order sliding surface is given out. By introducing a particular term, the convergence of the stated on the sliding surface is accelerated, making the controller have better performance without increasing tunable parameters. 2) A novel variable-gain switching controller is designed and analyzed theoretically and practically, which guarantees the robustness of the controller under model uncertainties and disturbances. By designing a novel form of the super-twisting algorithm, the sliding variable dynamics is further improved. 3) In the designed controller, the influence of the feedforward term is analyzed for the wafer stage system. Moreover, with acceleration-based variable gains, the acceleration's influence is appropriately handled, so that the precision of the wafer stage system is further improved. 4) The proposed controller is applied and verified in simulations and experiments on a real wafer scanning stage testbed.

The remainder of this paper is organized as follows. Section~2 presents the model of the wafer stage and the designed control strategy.
Section~3 presents a stability analysis of the proposed PFVSTA.
Simulation and experimental results are given in Section~4. Finally, Section~5 concludes the paper.

\section{Controller Design} 
\label{sec: controller design}
\subsection{Model of Wafer Stage}\label{subsec: system_model}
\begin{figure}[http]
  \centering
   \includegraphics[width=230pt]{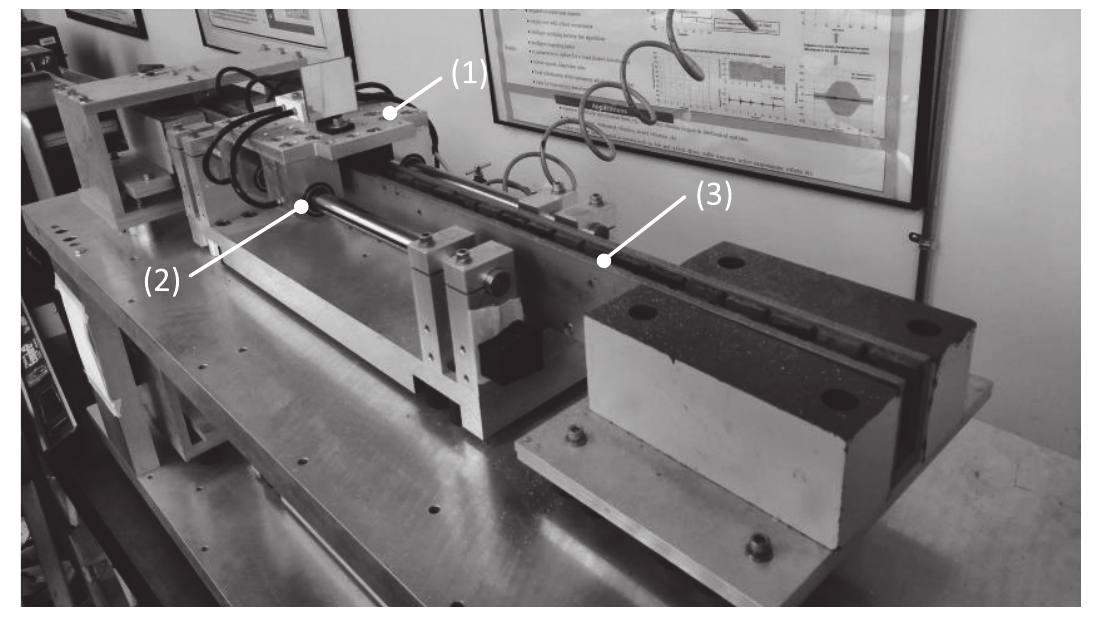}
   \caption{The wafer stage testbed including three parts: (1)~the moving stage, (2)~the air bearings, and (3)~the linear motor.}
   \label{fig:wafer_stage}
   \end{figure}

As shown in Fig.~\ref{fig:wafer_stage}, the testbed in this paper is a one-dimensional stage. It is supported via air bearings, and directly driven by a linear motor along with the sliding guide. Based on Newton's second law,  the dynamics of the wafer stage can be written as \cite{kuang2019simplified}
\begin{align} \label{eq:Newton}
\dot{v}=(F-f)/m
\end{align}
where $v$ is the velocity of the stage, $\dot{v}$ is the acceleration of the stage, $f$ denotes the sliding friction, $m$ is the mass of the stage, and $F$ is the actuation force exerted by the linear motor. 

Benefiting from the air bearings, the sliding friction can be approximated by
\begin{align} \label{eq:friction}
 f=K_v v+d_f
\end{align}
where $K_v$ is the viscous friction coefficient, and $d_f$ is the equivalent un-modelled part of the friction, i.e., friction uncertainties.

The actuation force $F$ is proportional to the input voltage $u$ and it can be written as
\begin{align} \label{eq:driven_force}
 F=Q u + d_r
\end{align}
where $Q [N/V]$ is the ratio between the actuation force and input voltage, and $d_r$ stands for unknown disturbances caused by, for instance, the force ripple of the motor~\cite{tan2002robust}.

Substituting (\ref{eq:friction}) and (\ref{eq:driven_force}) into (\ref{eq:Newton}), the system dynamics become
\begin{align} \label{model:natural model}
  \dot{v}=-T_v v+K u+d
\end{align}
with $T_v=\frac{K_v}{m}$, $K=\frac{Q}{m}$ and $d=\frac{d_f+d_r}{m}$.

Note that in (\ref{model:natural model}), $d$ is unknown and influences the system dynamics as a lumped disturbance. Hence, we can obtain the following nominal model: 
\begin{align}
 \dot{v}=-\bar{T}_{v} v+\bar{K}u
\end{align}
where $\bar{T}_{v}$ and $\bar{K}$ are the nominal values of ${T}_v$ and ${K}$ respectively. The uncertainties are represented in an additive form as
\begin{align}
 K&=~\bar{K}+\Delta K\\
 T_v&=~\bar{T}_{v}+\Delta T_v
\end{align}
where $\Delta K$ and $\Delta T_v$ stand for the uncertain parts in $K$ and $T_v$ respectively. 

\subsection{PFVSTA}

 \begin{figure}[http]
    \centering
     \includegraphics[width=230pt]{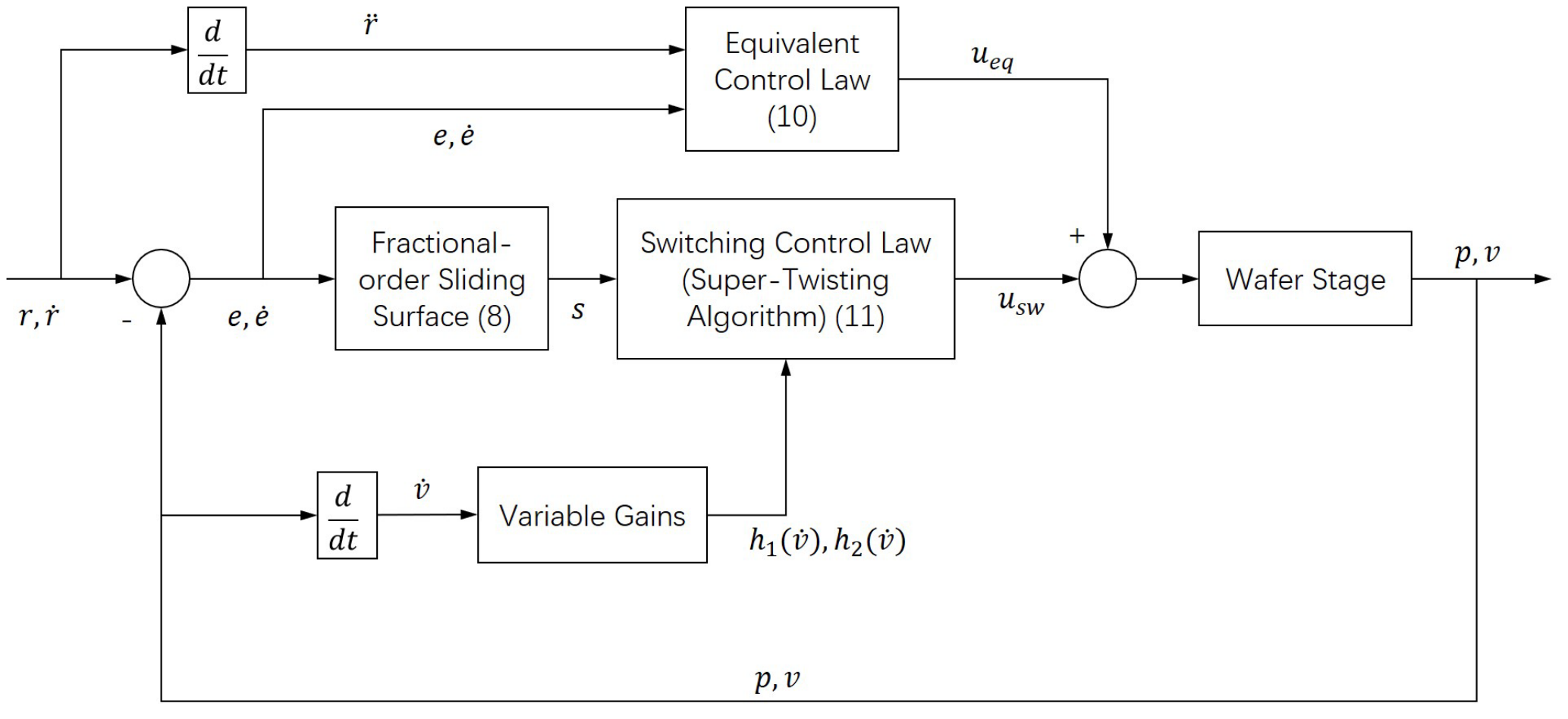}
     \caption{Block diagram of the proposed controller}

     \label{fig:structure}
     \end{figure}

For the succeeding PFVSTA design, the following assumption is imposed:
\begin{myass} \label{assumption:1}
  The parametric uncertainties of the model and the disturbance in the system are bounded, i.e., $\Delta K $, $\Delta T_v$ and $d$ are bounded.
\end{myass}

\label{sec:control_design}
In this paper, the tracking error is defined as~$
e=p-r$,
where $e$ is the tracking error, $p$ is the actual position of the stage, $r$ stands for the reference position, and it is apparent that $\dot{p}=v$. The sliding mode controller will be designed on the basis of $e$. Here, we have another assumption:
\begin{myass} \label{assumption2}
 The actual velocity $v$ and the reference velocity are Lipschitz functions, i.e.,  $\dot{v}$ and $\ddot{r}$ are bounded.
\end{myass}

A novel fractional-order sliding surface is presented as 
\begin{align} \label{def:sliding surface}
s=\dot{e}+k_1D^{\xi-1}\left[\textrm{sig}(e)^a\right]+k_2\textrm{sig}(e)^\frac{1}{a}, \xi,a \in (0,1)
\end{align}
where $k_1$ and $k_2$ are positive constants, $\textrm{sig}(\bullet)^*=\textrm{sgn}(\bullet)|\bullet|^{*}$, $D^{*}$ is fractional-order calculus operator and its definition is stated as Definition~\ref{def: fractional-order calculus} in Appendix A. 

As shown in Fig.~\ref{fig:structure}, we design the sliding mode controller on the equivalent-control basis~\cite{cucuzzella2018practical} and it is written as
\begin{align} \label{eq:u}
u=u_{eq}+u_{sw}
\end{align}
where $u_{eq}$ is the equivalent controller and $u_{sw}$ is the switching controller with variable-gain super-twisting algorithm.

According to the theory of SMC, the switching control law is implemented to make the sliding variable $s$ converge to 0, i.e., to make the states converge to the sliding surface. In this paper, a novel switching controller with VGSTA is addressed as
\begin{align} \label{eq:usw}
  u_{sw}=-\frac{h_1(\dot{v})}{\bar{K}}\Phi_1(s)-\int_0^{t}\frac{h_2(\dot{v})}{\bar{K}}\Phi_2(s)dt
\end{align}
with \begin{align}
    \Phi_1(s)&=|s|^{\alpha(s)
  }\textrm{sgn}(s) \label{phi1}\\
    \Phi_2(s)&=\Phi_1^{'}(s)\Phi_1(s) \label{phi2}\\
    \alpha(s)&=\frac{4|s|+1}{2\left(|s|+1\right)}
\end{align}
where $h_1(\dot{v})$ and $h_2(\dot{v})$ are the variable gains.

The equivalent control $u_{eq}$ is implemented to maintain the states on the sliding surface when no uncertainties and disturbances are considered. Therefore, by setting $\dot{s}=0$ and considering the model of the wafer stage, $u_{eq}$ can be obtained as
\begin{align}  \label{equivalent control}
u_{eq}=\frac{1}{\bar{K}}\left( \ddot{r}-k_1 \frac{dD^{\xi-1}\left[\textrm{sig}(e)^a\right]}{dt}-\frac{k_2}{a}|e|^{\frac{1-a}{a}}\dot{e}+\bar{T}_{v}v \right).
\end{align}



\begin{myrem}
For the application on wafer stages, the designed sliding surface~(\ref{def:sliding surface}) guarantees a fast and smooth response, which is beneficial for precision. The designed switching controller~(\ref{eq:usw}) uses a novel structure to make $s$ converge fast and steadily. The variable-gain structure also has an advantage over the traditional methods, especially when the reference acceleration is large.
From Fig.~\ref{fig:structure} and (\ref{equivalent control}), it can be seen that our $u_{eq}$ contains both feedback and feedforward terms. 
In the precision control of wafer stages, feedforward techniques are extensively used to enhance the performance~\cite{boeren2016frequency, zheng2017design}. The feedforward term in our method not only guarantees the completion of the sliding mode controller but also improves the performance of the wafer stage. All these merits of our proposed method will be verified by simulations and experiments in Section~{\ref{sec:sims and exps}}. 
\end{myrem}

\section{Dynamic Analysis of PFVSTA} \label{sec:dynamic analysis of FVGSTA Scheme}

The dynamics of $s$ can be obtained by substituting~(\ref{eq:u}) into the natural model~(\ref{model:natural model}),
\begin{align} \label{eq:s}
  \dot{s}&=-\frac{K}{\bar{K}}h_1(\dot{v})\Phi_1(s)-\frac{K}{\bar{K}}\int_0^{t}h_2(\dot{v})\Phi_2(s)dt \nonumber\\  
&~~~+\left(\frac{K}{\bar{K}}-1\right)\left(\ddot{r}-k_1\frac{dD^{\xi-1}\left[\textrm{sig}(e)^a\right]}{dt}-k_2\dot{e}\right)\nonumber \\  
&~~~+\left(\frac{K}{\bar{K}}\bar{T}_{v}-T_v\right)v+d.
\end{align}

Equation~(\ref{eq:s}) can be further rewritten as
\begin{align}\label{eq:dot_of_s}
\dot{s}&=-h_1(\dot{v})\Phi_1(s)-\int^t_0h_2(\dot{v})\Phi_2(s)dt +g_1(\dot{v}) 
\nonumber \\&~~~
+g_2(\dot{v})
\end{align}
with \begin{align}\label{eq:g1}
g_1(\dot{v})&=-\frac{\Delta K}{\bar{K}} h_1(\dot{v})\Phi_1(s)+\frac{\Delta K}{\bar{K}} \ddot{r}
\nonumber\\
&~~~
-\frac{\Delta K}{\bar{K}} k_1 D^\xi[\textrm{sig}(e)^a]+d
\end{align}
and\begin{align}\label{eq:g2}
g_2(\dot{v})=
\left(\frac{K}{\bar{K}}\bar{T}_v-T_v\right)v-\frac{\Delta K}{\bar{K}} \int^t_0h_2(\dot{v})\Phi_2(s).
\end{align}

Based on (\ref{eq:g1}), (\ref{eq:g2}), Assumption~\ref{assumption:1} and Assumption~\ref{assumption2}, we obtain that $|g_1(\dot{v})|$ and $\left|\frac{d g_2(v,\dot{v})}{dt}\right|$ are bounded with
\begin{align}
    |g_1(\dot{v})|
    & \leq D_1 |\Phi_1(s)|+D_2
    \\ \label{ineq:boundary of dg2}
    \left|\frac{d g_2(v,\dot{v})}{dt}\right|
    &\leq D_3|\dot{v}|+D_4|\Phi_2(s)|
\end{align} where $D_1$, $D_2$, $D_3$ and $D_4$ are positive constants that satisfy $D_1\geq \left|\frac{\Delta K}{\bar{K}}\right| |h_1(\dot{v})|$, $D_2 \geq \left|\frac{\Delta K}{\bar{K}}\right||\ddot{r}|+\left|k_1\frac{dD^{\xi-1}\left[\textrm{sig}(e)^a\right]}{dt}\right|+|d|$, $D_3\geq \left|\frac{K}{\bar{K}}\bar{T}_v-T_v\right|$ and $D_4\geq \left|\frac{\Delta K}{\bar{K}}\right||h_2(\dot{v})|$.

The following lemma is useful in the further analysis:
\begin{mylem} \label{lemma}
    (see \cite{bandyopadhyay2015stabilization, dadras2012fractional, kilbas2006theory}) For the fractional-order integration operator ${}_0I^{\alpha}_t$, the following relation holds:
    \begin{align}
        |{}_0I^{\alpha}_t f(t)|\leq \kappa|f(t)|
    \end{align}
    where $\kappa$ is a finite positive constant.
\end{mylem}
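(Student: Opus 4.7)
The plan is to reduce the fractional integral to a one-point evaluation via the first mean-value theorem for integrals and then absorb the leftover algebraic factors into the constant $\kappa$. First I would unfold the Riemann--Liouville definition supplied in Appendix~A,
\[
{}_0I^{\alpha}_t f(t)=\frac{1}{\Gamma(\alpha)}\int_0^t (t-\tau)^{\alpha-1}f(\tau)\,d\tau,
\]
and note that the kernel $(t-\tau)^{\alpha-1}$ keeps a constant sign on $(0,t)$. The integral mean-value theorem then supplies some $\tau^{*}\in[0,t]$ with
\[
{}_0I^{\alpha}_t f(t)=\frac{f(\tau^{*})}{\Gamma(\alpha)}\int_0^t (t-\tau)^{\alpha-1}d\tau=\frac{t^{\alpha}}{\Gamma(\alpha+1)}\,f(\tau^{*}).
\]

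Next I would upgrade this intermediate-point identity to the claimed pointwise bound by relating $f(\tau^{*})$ back to $f(t)$. In every later invocation of the lemma in Section~\ref{sec:dynamic analysis of FVGSTA Scheme}, the argument of the fractional integral is a continuous, bounded, one-signed closed-loop signal (for instance a term of the form $\textrm{sig}(e)^{a}$) whose magnitude varies slowly relative to the scanning horizon. Under this regime there exists a constant $c>0$, independent of $t$, satisfying $|f(\tau^{*})|\le c\,|f(t)|$; this is the implicit premise that \cite{bandyopadhyay2015stabilization,dadras2012fractional} rely on when they invoke the same bound. Combined with the finite operating horizon $T$ of the wafer stage, this yields
\[
|{}_0I^{\alpha}_t f(t)|\le \frac{c\,T^{\alpha}}{\Gamma(\alpha+1)}\,|f(t)| = \kappa\,|f(t)|,
\]
with $\kappa:=cT^{\alpha}/\Gamma(\alpha+1)$ positive and finite, which is exactly the statement of the lemma.

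The main obstacle is precisely the pointwise comparison $|f(\tau^{*})|\le c\,|f(t)|$: a general continuous $f$ can vanish at $t$ without its intermediate value $f(\tau^{*})$ doing so, so the bound does not follow from continuity alone. To close this gap I would appeal to the structural properties that the closed loop enforces on its arguments---continuity of $e$ from Assumption~\ref{assumption2}, boundedness from Assumption~\ref{assumption:1}, and the monotone approach of $\textrm{sig}(e)^{a}$ along the reaching phase of a sliding-mode trajectory---and cite the fractional-integration boundedness theorems collected in \cite{kilbas2006theory} to package all the constants into a single $\kappa$. If the comparison cannot be made uniform in a given regime, this is precisely the step that would need to be tightened before the subsequent stability argument is deployed.
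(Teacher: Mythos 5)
The paper offers no proof of this lemma at all---it is imported by citation---so there is nothing to compare your argument against except the cited sources, and there the trouble begins. The references (in particular Kilbas et al.) establish that the Riemann--Liouville integration operator is bounded as an operator on $L^p(0,T)$, i.e.\ $\|{}_0I^{\alpha}_t f\|_{p}\leq K\|f\|_{p}$ on a finite interval; they do not establish the \emph{pointwise} inequality $|{}_0I^{\alpha}_t f(t)|\leq \kappa |f(t)|$ that the lemma asserts and that Theorem~2 uses inside the expression for $\dot W(t)$. The pointwise version is false for general continuous $f$: take $f(\tau)=t_1-\tau$ on $[0,t_1]$, so that $f(t_1)=0$ while ${}_0I^{\alpha}_{t_1}f(t_1)=t_1^{\alpha+1}/\Gamma(\alpha+2)>0$; no finite $\kappa$ can satisfy the claimed bound at $t=t_1$. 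Your first step (mean-value theorem on the one-signed kernel, giving ${}_0I^{\alpha}_t f(t)=t^{\alpha}f(\tau^{*})/\Gamma(\alpha+1)$) is fine, but the step you flag yourself---the comparison $|f(\tau^{*})|\leq c\,|f(t)|$---is not a technical loose end to be tightened; it is logically equivalent to the lemma and fails exactly where the lemma fails, namely whenever the trajectory passes through $e(t)=0$ with a nonzero history. Appealing to ``slowly varying, one-signed closed-loop signals'' does not rescue it, since the convergence claim of Theorem~2 is precisely about trajectories approaching $e=0$.

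So the honest verdict is that your proposal correctly diagnoses the gap but then papers over it with an assumption that begs the question, and no proof along these lines can succeed because the statement itself is too strong. What the surrounding analysis actually needs is weaker: in the proof of Theorem~2 the term $k_1\,{}_{t_0}I^{1-\xi}\bigl[|e|^{a}\mathrm{sgn}(e)\bigr]$ only has to be dominated by something of the form $k_1\kappa\sup_{\tau\leq t}|e(\tau)|^{a}$, or handled via the $L^p$ operator bound together with an ultimate-bound argument, rather than by $k_1\kappa|e(t)|^{a}$ evaluated at the current instant. If you want a defensible version, restate the lemma as the norm bound the references actually prove, or restrict $f$ to the class of signals for which the pointwise comparison genuinely holds (e.g.\ monotone nonincreasing $|f|$), and then check that the invocation in Theorem~2 stays within that class.
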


\begin{mythm} \label{thm: stability of s}
For the system~(\ref{model:natural model}), if the variable gains $h_1(\dot{v})$ and $h_2(\dot{v})$ of the controller (\ref{eq:u})-(\ref{eq:usw}) are selected as
\begin{align} \label{eq:h_1}
    h_1(\dot{v})&=\frac{p_4}{p_2^3-p_1p_2p_4}\left( \frac{(-p_2 \delta_1(\gamma)+p_4\delta_2(\dot{v}))^2}{4} \right.\nonumber \\
    &~~~\left. +\frac{p_2^2p_1}{p_4}+p_2^2\delta_2(\dot{v})-p_1p_2\delta_1(\gamma)\right)
\\
\label{eq:h_2}
    h_2(\dot{v})&=\frac{1}{p_4}\left(p_1-p_2h_1(\dot{v})\right)
\end{align}
with 
\begin{align} \label{ineq: the range of ps}
    p_1p_4-p_2^2&>0, p_1>0, p_2<0
\\ \label{delta1}
    \delta_1(\gamma)&=D_1+\frac{D_2}{\Phi_1(\gamma)} \\ \label{delta2}
    \delta_2(\dot{v})&=D_3 |\dot{v}|+D_4
\end{align}
then the sliding variable $s$ will converge to the region 
\begin{align}
    \Gamma = \left\{s \big| |s|\leq \gamma\right\}, \gamma>0.
\end{align}
\end{mythm}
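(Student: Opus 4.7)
The plan is to adapt the Moreno--Osorio strict quadratic Lyapunov construction for variable-gain super-twisting algorithms, modified to accommodate the variable-exponent map $\Phi_1$ and the fractional-order terms buried inside $g_1$.

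First, I would rewrite the closed-loop sliding dynamics~(\ref{eq:dot_of_s}) in a two-dimensional form by introducing the auxiliary state vector $\zeta=[\zeta_1,\zeta_2]^T$ with $\zeta_1=\Phi_1(s)$ and $\zeta_2=-\int_0^t h_2(\dot v)\Phi_2(s)\,dt+g_2(\dot v)$. Using the identity $\Phi_2=\Phi_1'\Phi_1$ and differentiating, the dynamics become
\begin{align*}
\dot\zeta_1 &= \Phi_1'(s)\bigl[-h_1(\dot v)\zeta_1+\zeta_2+g_1(\dot v)\bigr],\\
\dot\zeta_2 &= -h_2(\dot v)\Phi_1'(s)\zeta_1+\dot g_2(\dot v),
\end{align*}
so that the perturbations $g_1$ and $\dot g_2$ enter explicitly with the bounds $|g_1|\le D_1|\Phi_1(s)|+D_2$ and $|\dot g_2|\le D_3|\dot v|+D_4|\Phi_2(s)|$ already at our disposal.

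Next, I would take the quadratic Lyapunov candidate $V(\zeta)=p_1\zeta_1^2+2p_2\zeta_1\zeta_2+p_4\zeta_2^2$. The hypothesis~(\ref{ineq: the range of ps}) makes the associated matrix positive definite, so $V$ is a proper Lyapunov candidate. Computing $\dot V$ along trajectories yields $\dot V=-\Phi_1'(s)\,\zeta^T Q(\dot v,\gamma)\,\zeta$ plus a linear-in-perturbation cross term; substituting the bounds on $g_1$ and $\dot g_2$ and invoking the restriction $|s|\ge\gamma$ to replace the constant $D_2$ by $D_2\,|\Phi_1(s)|/\Phi_1(\gamma)$---which is precisely why $\delta_1(\gamma)$ in~(\ref{delta1}) takes the form $D_1+D_2/\Phi_1(\gamma)$---converts the right-hand side into a pure quadratic form in $\zeta$ with coefficients controlled by $\delta_1(\gamma)$ and $\delta_2(\dot v)$.

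Third, I would impose positive definiteness of the residual matrix $Q(\dot v,\gamma)$. A Schur-complement analysis, using the relation $p_4 h_2+p_2 h_1=p_1$---which is just~(\ref{eq:h_2})---to cancel the linear-in-$\zeta_1\zeta_2$ terms, reduces positivity to a single quadratic inequality in $h_1(\dot v)$. Completing the square on the worst-case cross-coefficient $(-p_2\delta_1(\gamma)+p_4\delta_2(\dot v))$ then yields exactly the explicit lower bound~(\ref{eq:h_1}) on $h_1(\dot v)$. Under the prescribed gains,
\begin{align*}
\dot V\le -c\,\Phi_1'(s)\sqrt{V}
\end{align*}
for some $c>0$ whenever $|s|\ge\gamma$, and because $\Phi_1'(s)$ is bounded below by a positive constant on that set, $V$ strictly decreases until $\zeta$ enters the sub-level set corresponding to $|s|\le\gamma$, proving convergence of $s$ into $\Gamma$. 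Throughout, the fractional-order term $k_1 D^{\xi}[\mathrm{sig}(e)^a]$ appearing inside $g_1$ is absorbed into $D_1,D_2$ via Lemma~\ref{lemma}, and the boundedness of $\dot v$ and $\ddot r$ needed for $D_2,D_3$ follows from Assumption~\ref{assumption2}.

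The main obstacle I expect is handling the variable exponent $\alpha(s)$. Classical super-twisting proofs exploit the clean homogeneity of $\Phi_1(s)=|s|^{1/2}\mathrm{sgn}(s)$, whereas here $\alpha(s)$ ranges over $[1/2,2]$ and depends on $|s|$, so the standard weighted-norm scaling no longer applies directly. One must verify that $\Phi_1'(s)$ remains uniformly bounded below on $\{|s|\ge\gamma\}$ and that the Lyapunov derivative bound $\dot V\lesssim-\Phi_1'(s)\sqrt{V}$ still drives $\zeta$ into $\Gamma$ despite this inhomogeneity. A secondary technical point is the careful bookkeeping required to fold the lumped disturbance $d$, the fractional derivative of $\mathrm{sig}(e)^a$, and the acceleration-dependent parametric uncertainty into the four bounding constants $D_1,\ldots,D_4$ in a way fully consistent with Assumptions~\ref{assumption:1}--\ref{assumption2}.
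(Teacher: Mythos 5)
Your proposal follows essentially the same route as the paper: the same auxiliary second state (your $\zeta_2$ is the paper's $z$), the same quadratic Lyapunov function $V=\bm{\Theta}^{\bm{T}}\bm{P}\bm{\Theta}$ in $[\Phi_1(s),\ z]$, the same substitution of the bounds $\delta_1(\gamma)$, $\delta_2(\dot v)$ valid for $|s|>\gamma$, and the same verification that the gain formulas~(\ref{eq:h_1})--(\ref{eq:h_2}) render the resulting matrix negative definite (the paper checks $A<0$ and the determinant condition directly rather than via a Schur complement, and settles for $\dot V<0$ outside $\Gamma$ instead of your stronger $\dot V\le -c\,\Phi_1'(s)\sqrt{V}$ estimate). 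The concerns you flag about the variable exponent $\alpha(s)$ and the lower bound on $\Phi_1'(s)$ are legitimate, but the paper handles them no more carefully than you propose to.
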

\begin{proof}
For the sake of convenience, rewrite (\ref{eq:dot_of_s}) as
  \begin{align}\label{dynamics of s 2}
   \dot{s}&=-h_1(\dot{v})\Phi_1(s)+z+g_1(\dot{v}) \\
\dot{z}&=-h_2(\dot{v}) \Phi_2(s)+\frac{d g_2(\dot{v})}{dt}.
  \end{align}

  Construct a Lyapunov function candidate as
  \begin{align} \label{lyapunov function}
  V &= \bm{\Theta}^{\bm{T}} \bm{P} \bm{\Theta} \nonumber 
  \\
  &=
 \left[\begin{array}{cc}
      \Phi_1(s) &
      z
    \end{array}\right] 
    \left[ \begin{array}{cc}
      p_1 & p_2 \\
      p_2 & p_4
    \end{array}
    \right] 
    \left[
      \begin{array}{c}
        \Phi_1(s) \\ z
      \end{array}
    \right]
  \end{align}
where $\bm{\Theta}$ is defined as $
   \bm{\Theta}=
    \left[
    \begin{array}{cc}
      \Phi_1(s) & z
    \end{array}
    \right]^{\bm{T}}$
  and $\bm{P}$ is defined as
$
\bm{P}=
  \left[
    \begin{array}{cc}
    p_1 & p_2 \\
      p_2 & p_4
    \end{array}
  \right]$.
From~(\ref{ineq: the range of ps}), we find that $\bm{P}\geq 0$, which satisfies the condition of being selected as a Lyapunov function candidate.
  
The derivative of the Lyapunov function candidate is
  \begin{align}
\dot{V}
    =\left(\frac{d \bm{\Theta}}{dt}\right)^{\bm T} \bm{P \Theta}+\bm{\Theta^T P}\frac{d \bm{\Theta}}{dt}.
    \label{eq:derivative of V}
 \end{align}  
  
In~(\ref{eq:derivative of V}), $\frac{d \bm{\Theta}}{dt}$ is calculated as
\begin{align}
   \frac{d \bm{\Theta}}{dt}&=
\left[
\begin{array}{cc}
  \Phi_1^{\prime}(s)\dot{s} & \dot{z}
\end{array}
\right]^{\bm{T}} \nonumber \\
  &=\Phi_1^{'}(s) 
\left[
\begin{array}{cc}
l_1(\dot{v}) & 1\\
l_2(\dot{v}) & 0
\end{array}
\right]\left[
  \begin{array}{c}
    \Phi_1(s) \\
    z
  \end{array}
\right] \label{eq:derative of Theta}
\end{align} 
where
$l_1(\dot{v})$ and $l_2(\dot{v})$ are defined as
\begin{align}
     l_1(\dot{v})&=-h_1(\dot{v})+\frac{g_1(\dot{v})}{\Phi_1(s)} \\
    l_2(\dot{v})&=-h_2(\dot{v})+\frac{dg_2(v)}{dt}\frac{1}{\Phi_2(s)}
\end{align}
respectively for the sake of simplicity.  

By substituting~(\ref{eq:derative of Theta}) into~(\ref{eq:derivative of V}), we have
\begin{align}
    \dot{V}&=\Phi_1^{'}(s) \bm{\Theta}^{\bm{T}}
    \left[
    \begin{array}{cc}
      2p_1l_1(\dot{v})+2p_2l_2(\dot{v})   & * \\
       p_1+p_2l_1(\dot{v})+p_4 l_2(\dot{v})  & 2p_2
    \end{array}
    \right]\bm{\Theta} \nonumber \\
    &\triangleq 
    \Phi_1^{'}(s) \bm{\Theta}^{\bm{T}}
    \left[
    \begin{array}{cc}
     A  & B \\
     B  & C
    \end{array}
    \right]\bm{\Theta}.
\end{align}

Considering the definition of $l_1(\dot{v})$ and $l_2(\dot{v})$, we have\begin{align} \label{eq:A}
    A&=-2p_1h_1(\dot{v})-2p_2h_2(\dot{v})+2p_1\frac{g_1(\dot{v})}{\Phi_1(s)} 
    \nonumber \\&~~~
    +2p_2 \frac{dg_2(v)}{dt \Phi_2(s)}.
\end{align}

Substitute~(\ref{eq:h_1}) and~(\ref{eq:h_2}) into~(\ref{eq:A}) to obtain
\begin{align} \label{eq:A2}
    A&=\left(\frac{2p_2^2}{p_4}-2p_1 \right)h_1(\dot{v})-\frac{2p_1p_2}{p_4}+2p_1\frac{g_1(\dot{v})}{\Phi_1(s)}
    \nonumber \\
    &~~~+2p_2\frac{dg_2(\dot{v})}{dt}\frac{1}{\Phi_2(s)}
    \nonumber \\
    &=\frac{(-p_2\delta_1(\gamma)+p_4\delta_2(\dot{v}))^2}{2p_2}+2p_1\left(\frac{g_1(\dot{v})}{\Phi_1(s)}-\delta_1(\gamma)\right) \nonumber \\
    &~~~+2p_2\left(\delta_2(\dot{v})+\frac{dg_2(\dot{v})}{dt}\frac{1}{\Phi_2(s)}\right).
\end{align}

When the sliding variable $s$ is not in the region $\Gamma$, i.e., $|s|>\gamma>0$, we get that $\Phi_1(\gamma)<|\Phi_1(s)|$.
Hence, according to (\ref{delta1}) and (\ref{eq:g1}),
\begin{align}\label{ineq:delta1}
\delta_1(\gamma) =  D_1+\frac{D_2}{\Phi_1(\gamma)}
>
D_1+\frac{D_2}{|\Phi_1(s)|}
\geq
\frac{|g_1(\dot{v})|}{|\Phi_1(s)|}. 
\end{align}
For $\delta_2(\dot{v})$, we can get that
\begin{align} \label{ineq:delta2}
   \delta_2(\dot{v})=D_3 \frac{\dot{v}}{|\Phi_2(s)|}+D_4
   \geq
   \frac{1}{|\Phi_2(s)|}\left|\frac{dg_2(\dot{v})}{dt}\right|.
\end{align}

Considering that $p_1>0$, $p_2<0$ and substituting (\ref{ineq:delta1}) and (\ref{ineq:delta2}) into (\ref{eq:A}), finally we attain
$
    A<0.
$

On the other hand, 
\begin{align} \label{eq:tem}
\left|
\begin{array}{cc}
    A & B \\
    B & C
\end{array}
\right|
&=2p_2(2p_1l_1(\dot{v}))+4p_2^2l_2(\dot{v})\nonumber \\
&~~~-\left(p_2\frac{g_1(\dot{v})}{\Phi_1(s)}+p_4\frac{dg_2(v)}{dt\Phi_2(s)}\right)^2 \nonumber  \\
&=-4p_1p_2h_1(\dot{v})-4p_2^2h_2(\dot{v})\nonumber \\
&~~~+4p_1p_2\frac{g_1(\dot{v})}{\Phi_1(s)}+4p_2^2\frac{dg_2(v)}{dt\Phi_2(s)}
\nonumber \\&
~~~-\left(p_2\frac{g_1(\dot{v})}{\Phi_1(s)}+p_4\frac{dg_2(v)}{dt\Phi_2(s)}\right)^2.
\end{align}

By substituting~(\ref{eq:h_1}) and~(\ref{eq:h_2}) to (\ref{eq:tem}), we obtain
\begin{align}
    \left|
\begin{array}{cc}
    A & B \\
    B & C
\end{array}
\right|
&= 4p_2^2 \left( \frac{dg_2(\dot{v})}{dt\Phi_2(s)}+\delta_2(\dot{v})\right)-\left(p_2\frac{g_1(\dot{v})}{\Phi_1(s)}\right. \nonumber \\
&~~~\left.+p_4\frac{dg_2(v)}{dt\Phi_2(s)}\right)^2+4p_1p_2\left(\frac{g_1(\dot{v})}{\Phi_1(s)}-\delta_1(\gamma)\right) \nonumber \\
&~~~+ \left(-p_2\delta_1(\gamma)+p_4\delta_2(\dot{v})\right)^2
\nonumber 
>0.
\end{align}
Then based on $\Phi_{1}^{'}(s)> 0$ when $|s|>\gamma$, it is obvious that
\begin{align}
  \Phi_1^{'}(s)\bm{\Theta}^{\bm{T}}
\left[
\begin{array}{cc}
A & B \\
B & C
\end{array}
\right]
\bm{\Theta}< 0.
\end{align}

It has been shown that $V$ is indeed a Lyapunov function. This completes the proof of Theorem~\ref{thm: stability of s}.
\end{proof}

\begin{myrem} \label{remark1}
Theorem~\ref{thm: stability of s} not only states the sufficient conditions for the stability, but also gives the relation between the variable gains $h_1(\dot{v})$, $h_2(\dot{v})$ and the region $\Gamma$. Based on this relation, we can analyze how the acceleration affect the performance as follows:

Consider there are two states that $\dot{v}_1$ and $\dot{v}_2$, suppose that $|\dot{v}_2|>|\dot{v}_1|$ and let $p_1$, $p_2$ and $p_4$ remain the same to make the comparison fair, then we have 
\begin{align}
    h_1(\dot{v}_1)&=\frac{p_4}{p_2^3-p_1p_2p_4}\left( \frac{(p_2 \delta_1(\gamma_1)+p_4\delta_2(\dot{v}_1))^2}{4} \right.\nonumber \\
    &~~~\left. +\frac{p_2^2p_1}{p_4}+p_2^2\delta_2(\dot{v}_1)-p_1p_2\delta_1(\gamma_1)\right)
\\
    h_1(\dot{v}_2)&=\frac{p_4}{p_2^3-p_1p_2p_4}\left( \frac{(p_2 \delta_1(\gamma_2)+p_4\delta_2(\dot{v}_2))^2}{4} \right.\nonumber \\
    &~~~\left. +\frac{p_2^2p_1}{p_4}+p_2^2\delta_2(\dot{v}_2)-p_1p_2\delta_1(\gamma_2)\right)
\end{align}
and $\delta_2(\dot{v}_2)>\delta_2(\dot{v}_1)$ from~(\ref{delta2}). In constant-gain super-twisting algorithms, $h_1(\dot{v}_1)$ and $h_2(\dot{v}_2)$ have the relation that $h_1(\dot{v}_1)=h_1(\dot{v}_2)=C$.
Thus we have $\delta_1(\gamma_2)<\delta_1(\gamma_1)$ and $\gamma_2>\gamma_1$, which means that a larger absolute value of acceleration brings a larger $\gamma$ under the constant-gain super-twisting control. Based on this, we can infer that large acceleration has a negative influence on the precision of the system when the constant-gain super-twisting controller is applied. 

As for the variable-gain super-twisting controller, $h_1(\dot{v})$ can be designed as an increasing function, i.e., $h_1(\dot{v}_1)<h_2(\dot{v}_2)$. Ideally, if $h_1(\dot{v}_1)$ and $h_1(\dot{v}_2)$ are designed to satisfy that \begin{align} \label{eq:ideal}
    h_1(\dot{v}_2)-h_1(\dot{v}_1)&=\frac{p_4}{p_2^3-p_1p_2p_4}\left( \frac{p_4^2}{4}\left(\delta_2^2(\dot{v}_2)-\delta_2^2(\dot{v}_1)\right) \right.
    \nonumber \\
   &~~~+p_2^2(\delta_2(\dot{v}_2)-\delta_2(\dot{v}_1)) \nonumber \\
     &~~~\left.+\frac{p_2p_4\delta_1(\gamma_1)}{2}(\delta_2(\dot{v}_2)-\delta_2(\dot{v}_1))\right)
\end{align}
then we have $\gamma_2=\gamma_1$, which means that the negative influence of the acceleration is eliminated completely. In practice, due to the acceleration is usually time-varying and the variable gains are usually designed with a simple function,  (\ref{eq:ideal}) cannot always be satisfied. However, as long as $h_1(\dot{v})$ is designed properly, it is helpful to reduce the influence that the acceleration poses on the precision.   
\end{myrem}

\begin{mythm} \label{thm:tracking error}
  If the sliding variable $s$ reaches the region $\Gamma$, then the tracking error will converge to the region $E$, and
  \begin{align}
      E=\left\{ e(t)\big||e(t)|\leq \varepsilon\right\}
  \end{align}
  where $\varepsilon$ is the positive solution of the equation
  \begin{align} \label{eq:positive solution of the function}
      f(x)=k_1\kappa x^a-k_2 x^\frac{1}{a}+\gamma=0.
  \end{align}
\end{mythm}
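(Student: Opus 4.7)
The plan is to reduce the closed-loop dynamics to a scalar equation in $e$ once $s$ is confined to $\Gamma$, and then close the argument with a quadratic Lyapunov function. Rearranging the surface definition~(\ref{def:sliding surface}), membership of $s$ in $\Gamma$ gives
\begin{equation*}
\dot e = s - k_1 D^{\xi-1}[\textrm{sig}(e)^a] - k_2 \textrm{sig}(e)^{1/a},
\end{equation*}
in which $s$ now acts as an exogenous bounded input with $|s|\le\gamma$. Taking $V_e=\tfrac{1}{2}e^2$ and differentiating along trajectories yields $\dot V_e=e\dot e$, and the goal becomes to absorb the three resulting terms into a comparison inequality of the form $\dot V_e\le |e|f(|e|)$ that reproduces exactly the polynomial appearing in~(\ref{eq:positive solution of the function}).

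The next step is the term-by-term estimation. Since $\textrm{sig}(e)^{1/a}=\textrm{sgn}(e)|e|^{1/a}$, the $k_2$ contribution is the exact quantity $-k_2|e|^{1+1/a}$, while the driving term is bounded above by $|e||s|\le |e|\gamma$. The essential estimate is on the fractional term: because $\xi\in(0,1)$, the operator $D^{\xi-1}$ coincides with a fractional integration of positive order $1-\xi\in(0,1)$, so Lemma~\ref{lemma} applies to $\textrm{sig}(e(t))^a$ and yields $|D^{\xi-1}[\textrm{sig}(e)^a]|\le\kappa|e|^a$ pointwise in $t$. Combining the three bounds produces
\begin{equation*}
\dot V_e \le |e|\bigl(k_1\kappa|e|^a - k_2|e|^{1/a}+\gamma\bigr)=|e|f(|e|),
\end{equation*}
which is the desired comparison inequality.

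It then remains to verify that $f$ admits a unique positive zero $\varepsilon$ and is strictly negative beyond it, so that $\dot V_e<0$ whenever $|e|>\varepsilon$. A short analysis of $f$ on $(0,\infty)$ settles this: $f(0)=\gamma>0$; $f'(0^+)=+\infty$ because $a-1<0$; $f(x)\to-\infty$ as $x\to\infty$ since $1/a>1>a$ makes $-k_2x^{1/a}$ the dominant term; and $f'(x)=0$ reduces to $x^{1/a-a}=a^2 k_1\kappa/k_2$, which has a single positive root because $1/a-a>0$. Hence $f$ rises to a unique maximum and then decreases monotonically, crossing zero exactly once at $x=\varepsilon$. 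A standard invariant-set argument on $V_e$ then drives $|e|$ into $[0,\varepsilon]$ and traps it there, which is the claim.

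The principal obstacle I anticipate is the fractional estimate itself: care is required to match the convention used when defining $D^{\xi-1}$ in~(\ref{def:sliding surface}) with the hypothesis of Lemma~\ref{lemma}, and to confirm that the constant $\kappa$ carries no hidden dependence on $e$ through, for example, the length of the integration window. Once that estimate is secured and the monotonicity structure of $f$ is documented, the remaining steps are routine scalar calculus.
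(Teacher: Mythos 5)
Your proposal is correct and follows essentially the same route as the paper: both express $\dot e$ from the sliding-surface definition with $|s|\le\gamma$ as a bounded input, bound the fractional term via Lemma~\ref{lemma} to obtain the comparison function $f(|e|)$, and conclude by a scalar Lyapunov argument (the paper uses $W=|e|$ where you use $\tfrac{1}{2}e^2$, a cosmetic difference). Your analysis of the unique positive root of $f$ is also the content of the paper's subsequent remark, and it correctly repairs an evident sign typo in the paper's proof, which states $f(x)>0$ for $x>\varepsilon$ where $f(x)<0$ is what is needed and intended.
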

\begin{proof}
  Construct another Lyapunov function candidate
  \begin{align}
    W(t)=|e(t)|.
  \end{align}
  The derivative of $W(t)$ is 
  \begin{align}
    \dot{W}(t)=\textrm{sgn}\left(e(t)\right)\dot{e}(t).
  \end{align}

From the definition of the sliding surface~(\ref{def:sliding surface}),
  \begin{align}
    \dot{e}(t)=-k_1D^{\xi-1}\left(\textrm{sig}^a(e(t))\right)-k_2\textrm{sig}^{\frac{1}{a}}(e(t))+s.
  \end{align}
  Then $\dot{W}(t)$ can be rewritten as
  \begin{align}
    \dot{W}(t)&=\textrm{sgn}(e(t))\left(-k_1D^{\xi-1}\left(\textrm{sig}^a(e(t))\right)-k_2e^{\frac{1}{a}}(t)+s\right) \nonumber\\
    &\leq|k_1 {}_{t_0}I^{1-\xi}|e(t)|^a \textrm{sgn}(e(t))|
 -k_2|e(t)|^{\frac{1}{a}}+sgn(e(t))s.\nonumber
 \end{align}
 From Lemma~\ref{lemma} and the precondition that $|s|\leq \gamma $, we can obtain
$
 \dot{W}(t)
 \leq k_1 \kappa|e(t)|^a-k_2|e(t)|^{\frac{1}{a}}+\gamma
$.
  If the positive solution of $f(x)=k_1\kappa x^a-k_2 x^{\frac{1}{a}}+\gamma=0$ is denoted as $\varepsilon$, then it is obvious that when $x>\varepsilon$,
     $f(x)> 0$.

  Thus, when $|e(t)|>\varepsilon$, we have
  \begin{align}
      \dot{W}(t)<0.
  \end{align}
  
This completes the proof.
\end{proof}

\begin{myrem}
According to Theorem~\ref{thm:tracking error}, the boundary of the stable state is the solution of~(\ref{eq:positive solution of the function}). Based on the property of function $f(x)$, a smaller $\gamma$ leads to a smaller $\varepsilon$ and further, higher precision of the system theoretically, which keeps consistent with our cognition. 
\end{myrem}

\begin{myrem}
When $0<a<1$, the positive solution of $f(x)=0$ always exist. We can assume that there are two functions $f_1(x)=k_1\kappa x^a+\gamma$ and $f_2(x)=k_2 x^{\frac{1}{a}}$. It is apparent that $f_1(0)>f_2(0)$. Because $f_1^{''}(x)=k_1\kappa a (a-1) x^{a-2}<0$ and $f_2^{''}(x)=\frac{k_2}{a}\left(\frac{1}{a}-1\right)x^{\frac{1}{a}-2}>0$, there always be $f_1(\varepsilon)=f_2(\varepsilon)$, i.e., $f(\varepsilon)=0$. 

For $a=1$, the sliding surface becomes $s=\dot{e}+k_1D^{\xi-1}[\textrm{sig}(e)]+k_2e$, which is a traditional fractional-order sliding surface. Through a similar analysis procedure, we can get that $f_1^{''}(x)=f_2^{''}(x)=0$, which means that if $k_1$, $k_2$ or $\kappa$ is selected improperly, a finite $\varepsilon$ cannot be obtained, i.e., the stability on the sliding surface cannot be guaranteed. 
\end{myrem}

\section{Simulations and Experiments} \label{sec:sims and exps}
In this part, simulations and experiments are conducted to verify the effectiveness of the proposed PFVSTA. To make the results convincing, six different controllers will be mentioned and compared in different aspects. They are listed here to facilitate understanding:

\begin{itemize}
    \item the traditional CGSTA~\cite{evangelista2012lyapunov}, $u=\frac{1}{\bar{K}}\left( \ddot{r}-k_2\dot{e}+\bar{T}_{v}v \right)-\frac{h_1}{\bar{K}}\Phi_1(s)-\int_0^{t}\frac{h_2}{\bar{K}}\Phi_2(s)dt$, $\Phi_1(s)=|s|^{\frac{1}{2}}\textrm{sgn}(s)$, $\Phi_2(s)=\Phi_1^{'}(s)\Phi_1(s)$.
    
    \item 
  the VGSTA (also denoted as LVGSTA due to the linear sliding surface)~\cite{gonzalez2012variable}, $u=\frac{1}{\bar{K}}\left( \ddot{r}-k_2\dot{e}+\bar{T}_{v}v \right)-\frac{h_1(\dot{v})}{\bar{K}}\Phi_1(s)-\int_0^{t}\frac{h_2(\dot{v})}{\bar{K}}\Phi_2(s)dt$, $\Phi_1(s)=|s|^{\frac{1}{2}}\textrm{sgn}(s)+h_3s$, $\Phi_2(s)=\Phi_1^{'}(s)\Phi_1(s)$.
    \item  the variable-gain PID (VGPID) controller used in ~\cite{li2015data}.
    \item the FCGSTA, $u=\frac{1}{\bar{K}}\left(\ddot{r}-k_1 \frac{dD^{\xi-1}\left[\textrm{sig}(e)\right]}{dt}-k_2\dot{e}+\bar{T}_{v}v \right)-\frac{h_1}{\bar{K}}\Phi_1(s)-\int_0^{t}\frac{h_2}{\bar{K}}\Phi_2(s)dt$, $\Phi_1(s)=|s|^{\frac{1}{2}}\textrm{sgn}(s)+h_3s$, $\Phi_2(s)=\Phi_1^{'}(s)\Phi_1(s)$.
     \item the proposed PFVSTA scheme (\ref{eq:u})-(\ref{equivalent control}).
    \item the proposed PFVSTA scheme with no feedforward term, denoted as incomplete PFVSTA (IFVSTA), $u=\frac{1}{\bar{K}}\left( -k_1 \frac{dD^{\xi-1}\left[\textrm{sig}(e)^a\right]}{dt}-\frac{k_2}{a}|e|^{\frac{1-a}{a}}\dot{e}+\bar{T}_{v}v \right)-\frac{h_1(\dot{v})}{\bar{K}}\Phi_1(s)-\int_0^{t}\frac{h_2(\dot{v})}{\bar{K}}\Phi_2(s)dt$, $\Phi_1(s)$ and $\Phi_2(s)$ are obtained from (\ref{phi1}) and (\ref{phi2}) respectively.
\end{itemize}

 

  \begin{figure}[http]
    \centering
     \includegraphics[width=230pt]{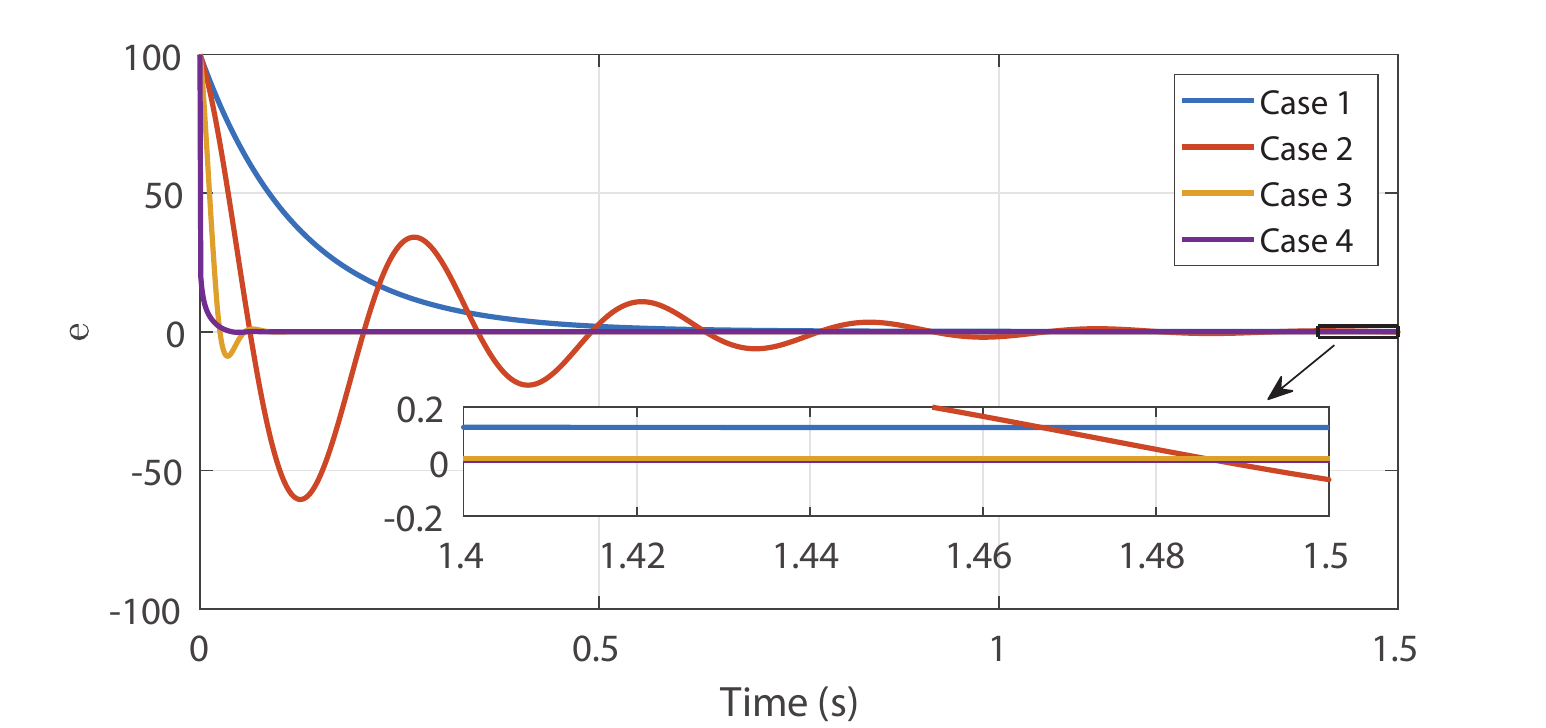}
     \caption{Dynamics of $e$ around different sliding surfaces.}

     \label{fig: sliding surface}
     \end{figure}

\subsection{Simulation Studies} \label{subsec:simulation results}

\subsubsection{Sliding Surface Comparison}
   \begin{table}[tp]  
    \centering
    \caption{Parameters of Controllers in Simulations.}  
    \label{tab: parameters of controller}
  \begin{tabular}{cccc}
    \cline{1-4} 
    Parameters & CGSTA & VGSTA & PFVSTA, IFVSTA \\
    \cline{1-4}
    $k_1$ & - & - & 53\\
    $k_2$ & 1200 & 1200 & 1200 \\
    $h_1$ & 1500  & $0.1 \dot{v}+50$ &$0.1 \dot{v}+50$\\
    $h_2$ & 10 & $0.1 \dot{v}+10$ & $0.1 \dot{v}+10$\\
    $h_3$ & - & 38 & -\\
\cline{1-4}
      \end{tabular}
    \end{table}
    \linespread{1}

Simulations are conducted first to investigate the proposed sliding surface (denoted as Case 4). Simulations are implemented in MATLAB/SIMULINK. For comparison, the simulations about the tracking error on the conventional linear sliding surface (LSS), the integral sliding surface (ISS), and the fractional-order sliding surface (FSS) are conducted. The three cases are denoted as Case 1, Case 2, and Case 3, respectively. According to the literature~\cite{gonzalez2012variable}, the LSS is formulated as
    $s=\dot{e}+k_2 e$
, the ISS is 
$
    s=\dot{e}+k_1 \int_0^t e+k_2 e
$
and the FSS is
$
   s= \dot{e}+k_1D^{\xi-1}\left[\textrm{sig}(e)\right]+k_2e
$.
Without loss of generality, parameters of the sliding surfaces are set as: $k_1=8$, $k_2=500$, $\xi=0.5$, $a=0.5$. To implement the fractional-order calculus in simulation, we use the discrete-time version to approximate the continuous fractional-order calculus~\cite{monje2010fractional}. 
In practice, $s$ cannot remain zero if uncertainties and disturbances exist in the system. Thus, $s$ is set as $s=1$ to investigate different sliding surfaces' performance with uncertainties. The simulation results are shown in Fig.~\ref{fig: sliding surface}.

From Fig.~\ref{fig: sliding surface}, it is apparent that the tracking error on the LSS converges to the equilibrium point at a prolonged rate. The tracking error on the ISS converges fast, but the trajectory is with a rather significant overshoot, and it takes a long time to converge. For the FSS, its trajectory converges faster than ISS, and it has a much smaller overshoot. As for our proposed sliding surface, tracking error converges the fastest, and it has a small overshoot. The enlarged window demonstrates the trajectories from the time $1.4$ s to $1.5$ s to represent how much the uncertainties influence the trajectories of each sliding surface. From the result, we notice that ISS's trajectory has not yet converged, LSS has a rather large distance to zero, and the trajectory of the proposed sliding surface is closest to zero. This phenomenon represents that the uncertainty on $s$ has the least influence on the state on the proposed sliding surface. 

\subsubsection{Controller Comparison in Acceleration Phases}
Based on the model in~Section~\ref{subsec: system_model}, we establish the simulation model of the wafer stage as
$
    G(s)=\frac{K}{s(s+T_v)}
$
, of which the input is the control signal and the output is the position of the mover. Parameters are chose as $T_v=1.092$ and $K=3.9124$, which were identified from the real system in our previous work~\cite{kuang2020fractional}.

\begin{figure}[http]
    \centering
     \includegraphics[width=230pt]{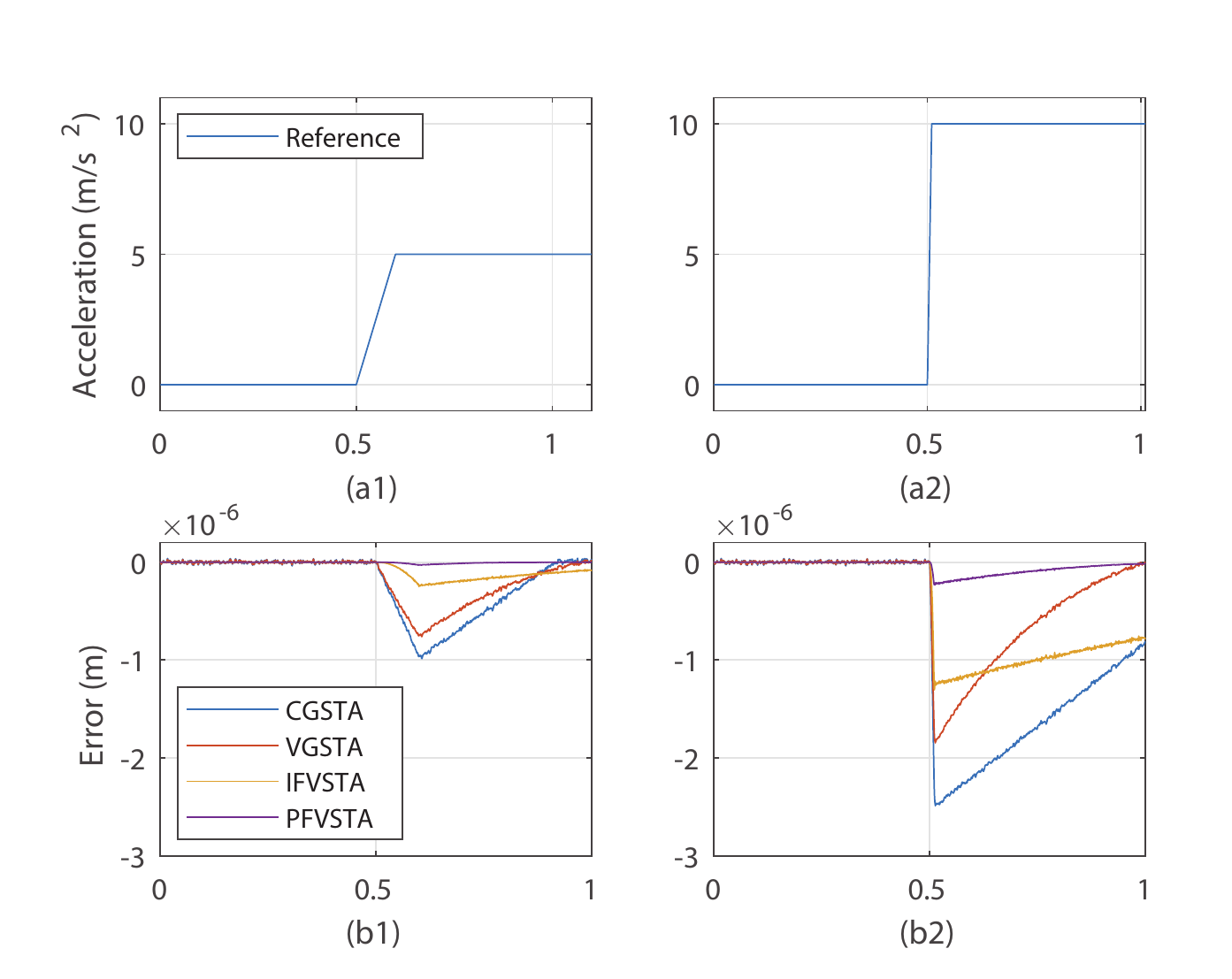}
     \caption{Performance of the controllers in the simulations of acceleration phases (the horizontal axis represents time, unit: s)}
     \label{fig:sim_error_and_control}
     \end{figure} 

The reference accelerations are designed as shown in Fig.~\ref{fig:sim_error_and_control}~(a1) and (a2). The reference position is obtained by integration. Note that the nominal parameters of the model are intended to be set as $\bar{T}_v=1$ and $\bar{K}=4$ to simulate the model uncertainties. Band-limited white noise is also introduced as measurement noises of the system. We compare four control algorithms in this group of simulations: CGSTA, VGSTA, IFVSTA, and PFVSTA. 
Their parameters are shown in Table~\ref{tab: parameters of controller} and the results are shown in Fig.~\ref{fig:sim_error_and_control}.

\begin{figure}[http]
      \centering
       \includegraphics[width=230pt]{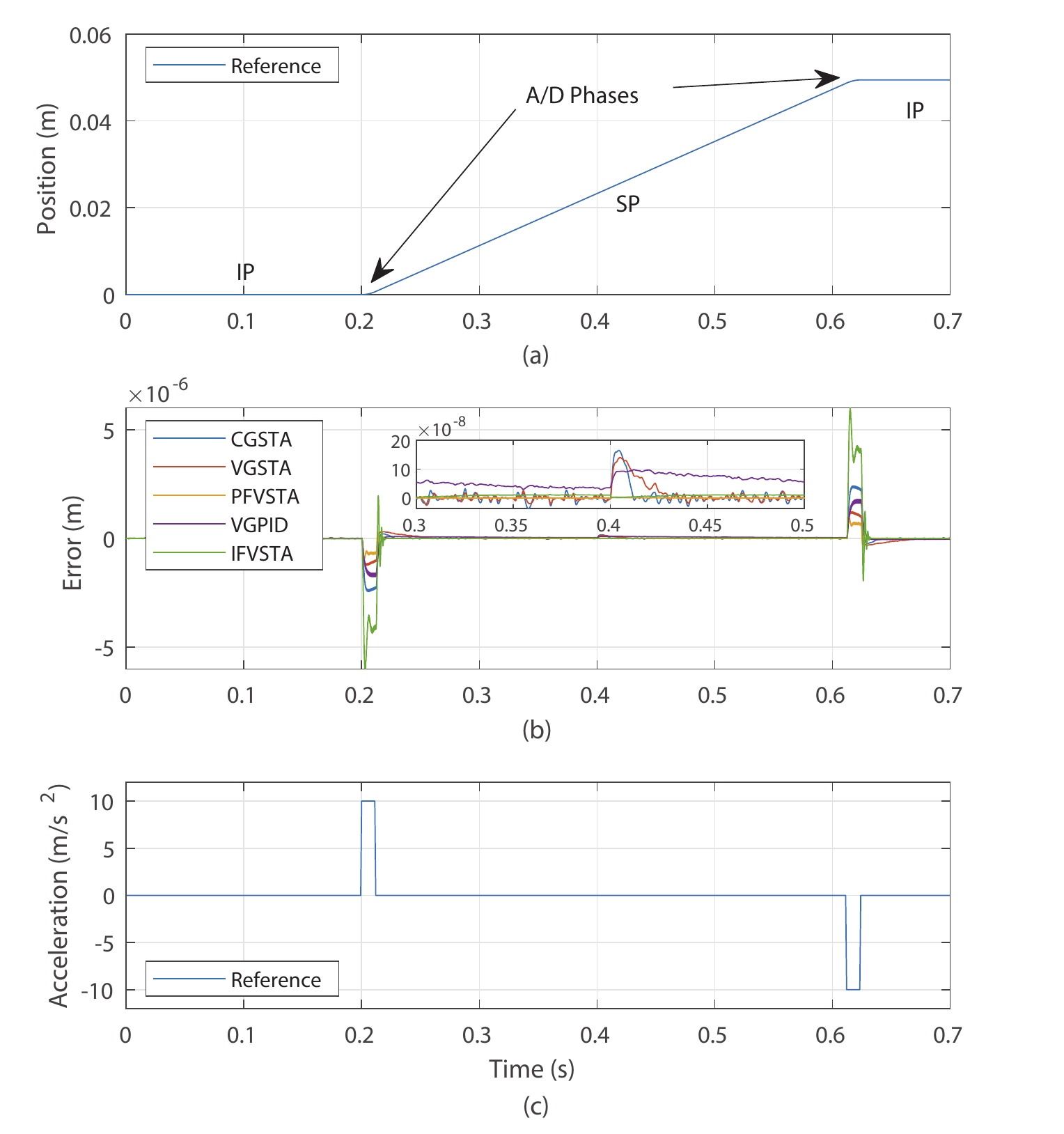}
       \caption{Performance of the controllers in simulational scanning phases. }
       \label{fig:sim_scan}
       \end{figure}

By comparing Fig.~\ref{fig:sim_error_and_control}~(b1)  with (a1) and (b2) with (a2), we can see that tracking errors of all the four controllers increase with the acceleration. This phenomenon accords well with our theoretical analysis in Remark~\ref{remark1}. We can see that a larger acceleration results in a larger uncertainty to the system from~(\ref{ineq:boundary of dg2}), which will deteriorate the system's precision. Results shown in Fig.~\ref{fig:sim_error_and_control} can also reflect the necessity to investigate how the acceleration of reference influences the performance. In wafer stage systems, the acceleration phases usually last less than 0.1 s, and the scanning process begins just after the acceleration. Fig.~\ref{fig:sim_error_and_control} demonstrates that the errors of most controllers cannot converge to zero in 0.1 s. Therefore the performance in the acceleration phases will influence the scanning process as well.


Besides, from the results, PFVSTA's error trajectory is always the smallest throughout the simulations, which shows that our method has a significant advantage over the other methods. Moreover, the number of tunable parameters in our proposed method is the same as other methods. It means that the improvement in precision will not bring difficulties to tune parameters in the controller. Moreover, by comparing PFVSTA and IFVSTA in Fig.~\ref{fig:sim_error_and_control}~(b1) and (b2) and considering the difference between PFVSTA and IFVSTA, we can conclude that the feedforward term in proposed $u_{eq}$ is necessary to get the excellent performance. This conclusion will also be proven more than one time in the following simulations and experiments.


     
 \subsubsection{Controller Comparison in Scanning Process}    
The second group of simulations is conducted to simulate the scanning process. As shown in Fig.~\ref{fig:sim_scan}~(a), a typical scanning trajectory contains at least three phases: the idle phase (IP), the scanning phase (SP), and the acceleration/deceleration phase (A/D phase). The scan length is set as 0.05 m, the first idle phase is set as 0.2 s, the acceleration and deceleration phases are set as 0.012 s, the scanning phase is set as 0.4 s, and the maximum acceleration is set as 10 m/s$^2$. Particularly, when the moving stage moves 0.23 m (around 0.4 s in this case), an extra step disturbance with an amplitude of 0.1 is applied intentionally to investigate the robustness of controllers. 

Apart from the four control algorithms in the previous group, we also implemented the VGPID controller as a comparison to study the robustness of the proposed PFVSTA.
Parameter of PFVSTA are tuned as $k_1=33$, $k_2=10000$, $h_1=0.01\dot{v}+40$, and $h_2=0.01\dot{v}+20$. Parameters of VGSTA are $k_2=1200$, $h_1=0.5\dot{v}+50$, $h_2=0.1 \dot{v}+20$ and $h_3=43$. CGSTA has parameters of $k_2=1200$, $h_1=1500$, $h_2=80$. As for VGPID, the gains for the proportional term, integral term and derivative term are $K_{p}=$1.2$\times$10$^6$, $K_{i}=$8$\times$10$^6$ and $K_{d}=$3$\times$10$^3$. The variation of the proportion are $\Delta K_{p1}=$0.5$\times$10$^6$ in the A/D phases and $\Delta K_{p2}=$0.2$\times$10$^6$ in other phases.

The results are shown in Fig.~\ref{fig:sim_scan}~(b). We can see that in the A/D phases, CGSTA has the largest error, which proves that compared with variable-gain methods, constant gain methods cannot get satisfying performance. This proves further that the variable-gain structure is helpful to reduce the influence of reference acceleration and indirectly prove the rationality of the variable structure in our proposed $u_{sw}$~(\ref{eq:usw}). From the enlarged window, we can see that the proposed PFVSTA is smooth and precise in the scanning phase, while the VGPID is not as precise as other methods. Moreover, after the extra disturbance is applied, position errors of CGSTA and VGSTA converge quickly, while VGPID converges rather slow. This shows that compared with the SMC method, the PID method is easy to be influenced by extra disturbances. As for the proposed PFVSTA method, the position error is impervious to the disturbance, which shows its excellent robustness.

\subsection{Experimental Setup} \label{subsec:experimental setup}

\begin{figure}[http]
  \centering
   \includegraphics[width=230pt]{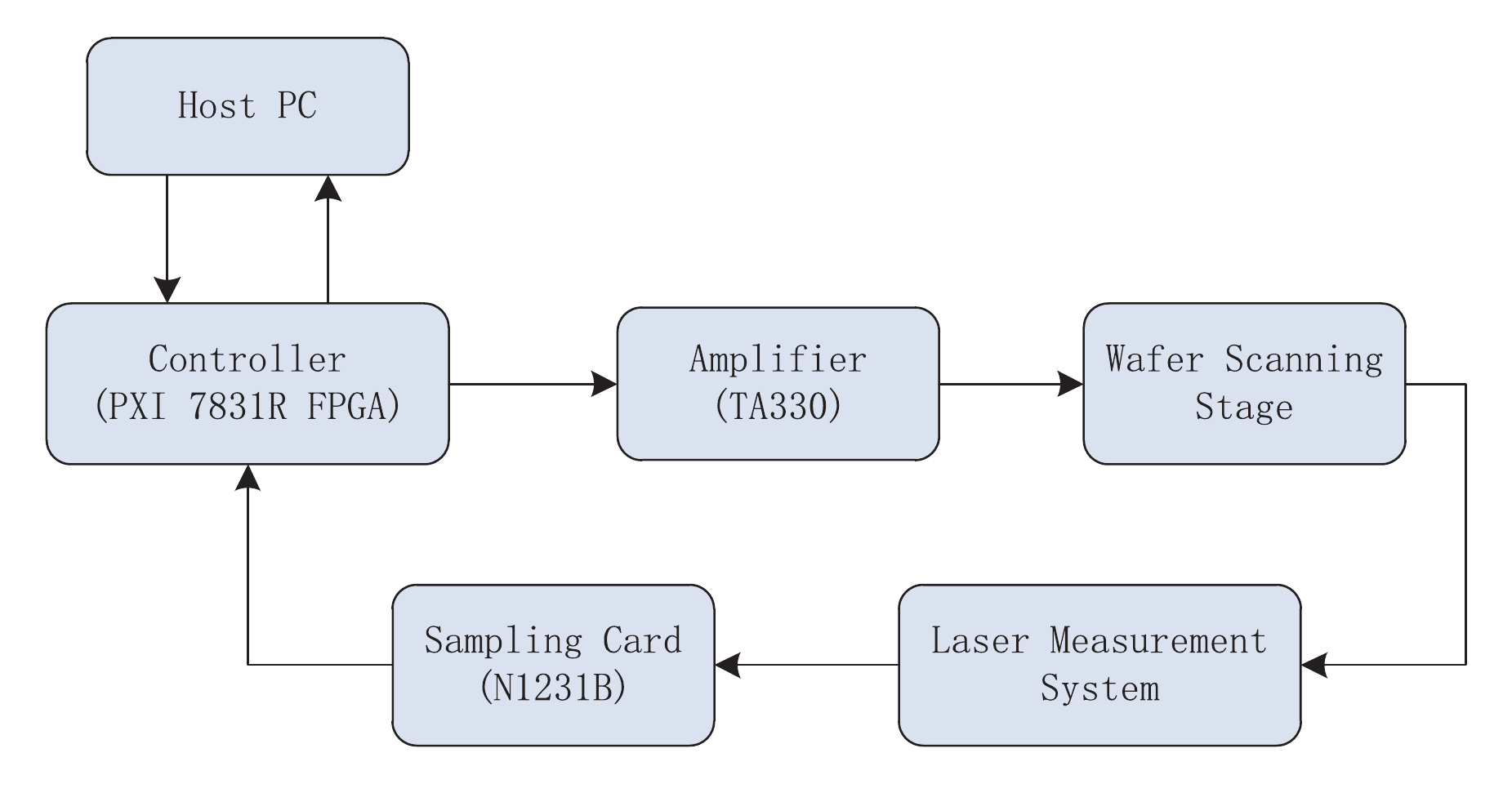}
   \caption{The structure of the experimental system.}
   \label{fig:structure_closed_loop}
   \end{figure}

   \begin{figure}[http]
  \centering
   \includegraphics[width=230pt]{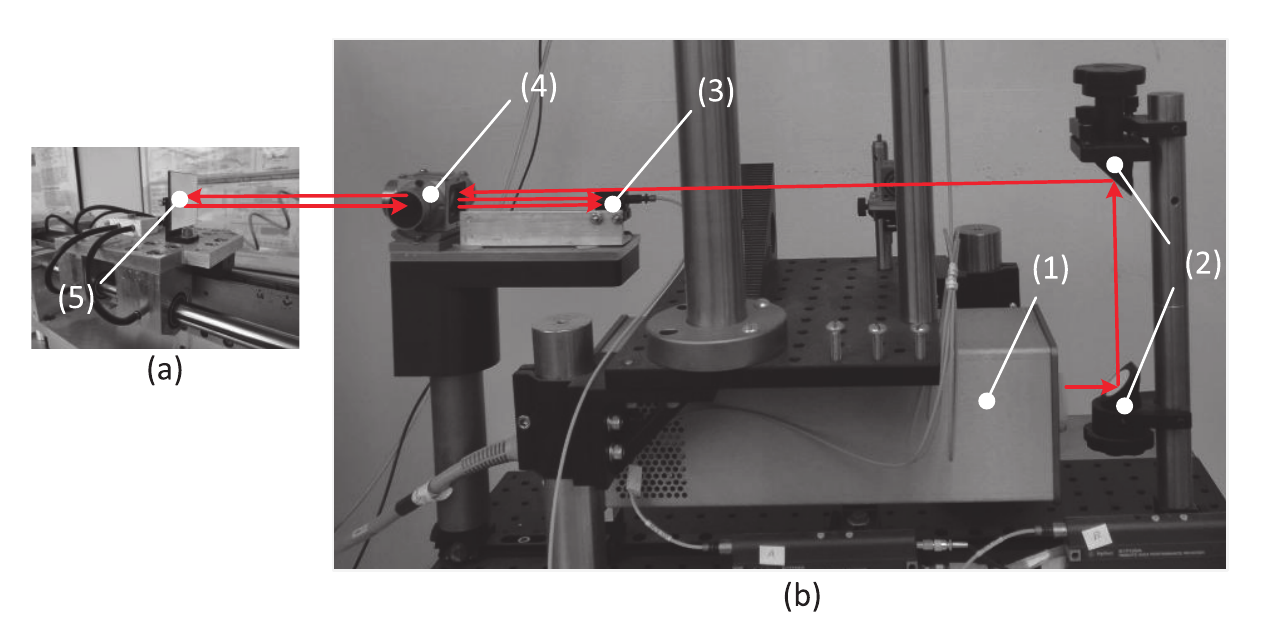}
   \caption{The laser system for the measurement of the wafer stage position, including (1)~a laser generator, (2)~two reflectors, (3)~an optical pickup, (4)~an interferometer, (5)~a reflecting mirror. (a) is part of wafer stage and (b) is part of laser measurement system. The red arrows stand for the optical path.}
   \label{fig:laser}
   \end{figure}
   
We also tested the proposed algorithm on the wafer stage shown in Fig.~\ref{fig:wafer_stage}. As we mentioned before, the testbed mainly consists of a linear motor and a moving stage. The moving stage is driven by the linear motor along with the sliding guide. Between the sliding guide and the moving stage, air bearings are fixed to reduce the friction. As shown in Fig.~\ref{fig:structure_closed_loop}, the experimental system includes a host computer, a controller, an amplifier, the wafer stage, a laser measurement system, and a sampling card. Control algorithms are implemented with the LabView Real-Time system running on the controller (a field-programmable gate array, i.e., FPGA, PXI 7831R from National Instruments). The control signals for the stages and counter-masses, generated by the FPGA controllers, are amplified via separate amplifiers (TA330 from Trust Automation company) and then sent to the linear motor of the wafer stage. In the meanwhile, the position signal from the laser measurement system (in~Fig.~\ref{fig:laser}) is sampled by a sampling card (N1231B from Keysight) and fed back into the controller running on the FPGA. 

In this paper, two groups of experiments are conducted, which are denoted as Case~1 and Case~2. In both cases, the scan length is set as 0.05 m, the scan velocity is set as 0.1 m/s, the waiting time before the scan is set as 0.2 s, and the holding time at the top of the scan is set as 0.1 s. The difference is that the maximum acceleration is set as 1.5 $m/s^2$ in Case 1 and 10 $m/s^2$ in Case 2. The reference scanning trajectories are displayed in Fig.~\ref{fig:ex_case1}~(a) and Fig.~\ref{fig:ex_case2}~(a) respectively.

In either case, four controllers are compared: LVGSTA, FCGSTA, IFVSTA and PFVSTA. For the proposed PFVSTA and IFVSTA, $\xi$ and $a$ are directly chosen as $1/2$. Other parameters are tuned to get the best performance, and they are $k_1=100$, $k_2=2\times 10^7$, $h_1(\dot{v})=550|\dot{v}|+13$ and $h_2(\dot{v})=10|\dot{v}|+4$. For the other two control schemes, to make the comparison fair, all the parameters are also tuned to guarantee that the best performance is attained.


   \subsection{Experimental Results} \label{subsec:experimental results}
   
   \begin{figure}[http]
    \centering
     \includegraphics[width=230pt]{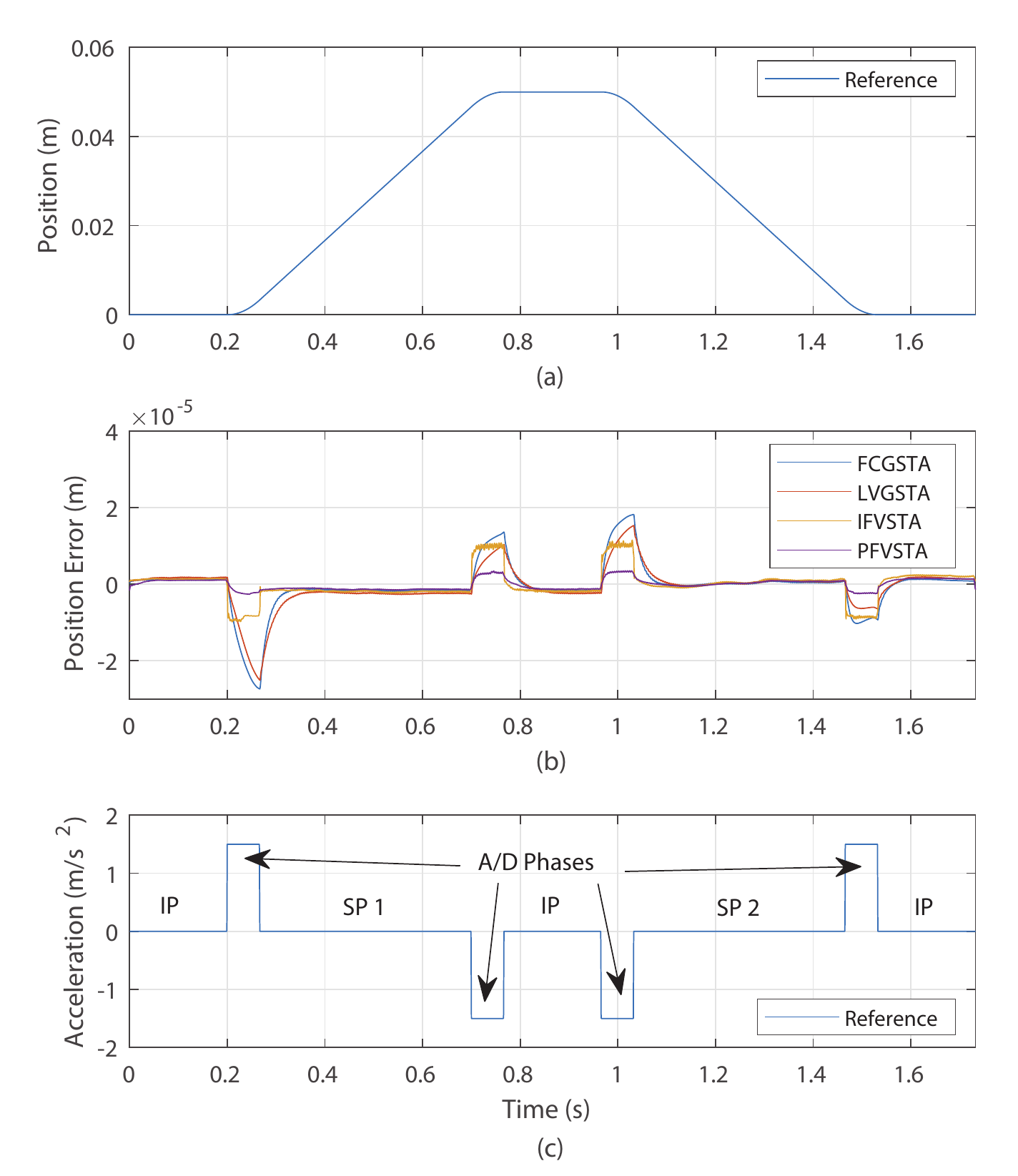}
     \caption{Experimental results of the controllers in Case~1.} 
     \label{fig:ex_case1}
     \end{figure}
     
Fig.~\ref{fig:ex_case1} depicts the experimental results in Case 1. 
From Fig.~\ref{fig:ex_case1} (b), we can see that all the four controllers have relatively larger absolute values of position errors during the A/D phases while having relatively smaller absolute values during the idle phases and scanning phases.
We can also see that either in idle phases, scanning phases, or A/D phases, the proposed PFVSTA has the smallest error overall. For the FCGSTA, we can find that it usually has the largest error in the A/D phases, while it has a medium error in the idle and scanning phases. This phenomenon proves that the reference's acceleration is more likely to influence the performance of FCGSTA than LVGSTA and PFVSTA. Considering the differences among these controllers, it reflects the variable-gain structure, the designed sliding surface~(\ref{def:sliding surface}) and the designed switching control law~(\ref{eq:usw}) help to improve the performance in A/D phases. 


Fig.~\ref{fig:ex_case2} shows the results of the experiments in Case 2. As the reference trajectory's maximum acceleration is much larger than the experiments in Case 1, the corners of the reference trajectory are apparently sharper. 
For the position errors, the peak values of FCGSTA and LVGSTA reach at about 100 $\mu m$, while the peak values of IFVSTA are the largest. The error of the proposed PFVSTA still keeps the smallest value in all three different phases. This shows that the PFVSTA remains its high precision even when the reference acceleration is raised (compared with Case 1). 
From the enlarged window in Fig.~\ref{fig:ex_case2}~(b), we can also observe that in the scanning phase, PFVSTA is the smoothest in the four controllers, which proves that the chattering phenomenon is significantly reduced.
All of these phenomena are consistent with the previous simulation results and theoretical analysis. 

     \begin{figure}[http]
      \centering
       \includegraphics[width=230pt]{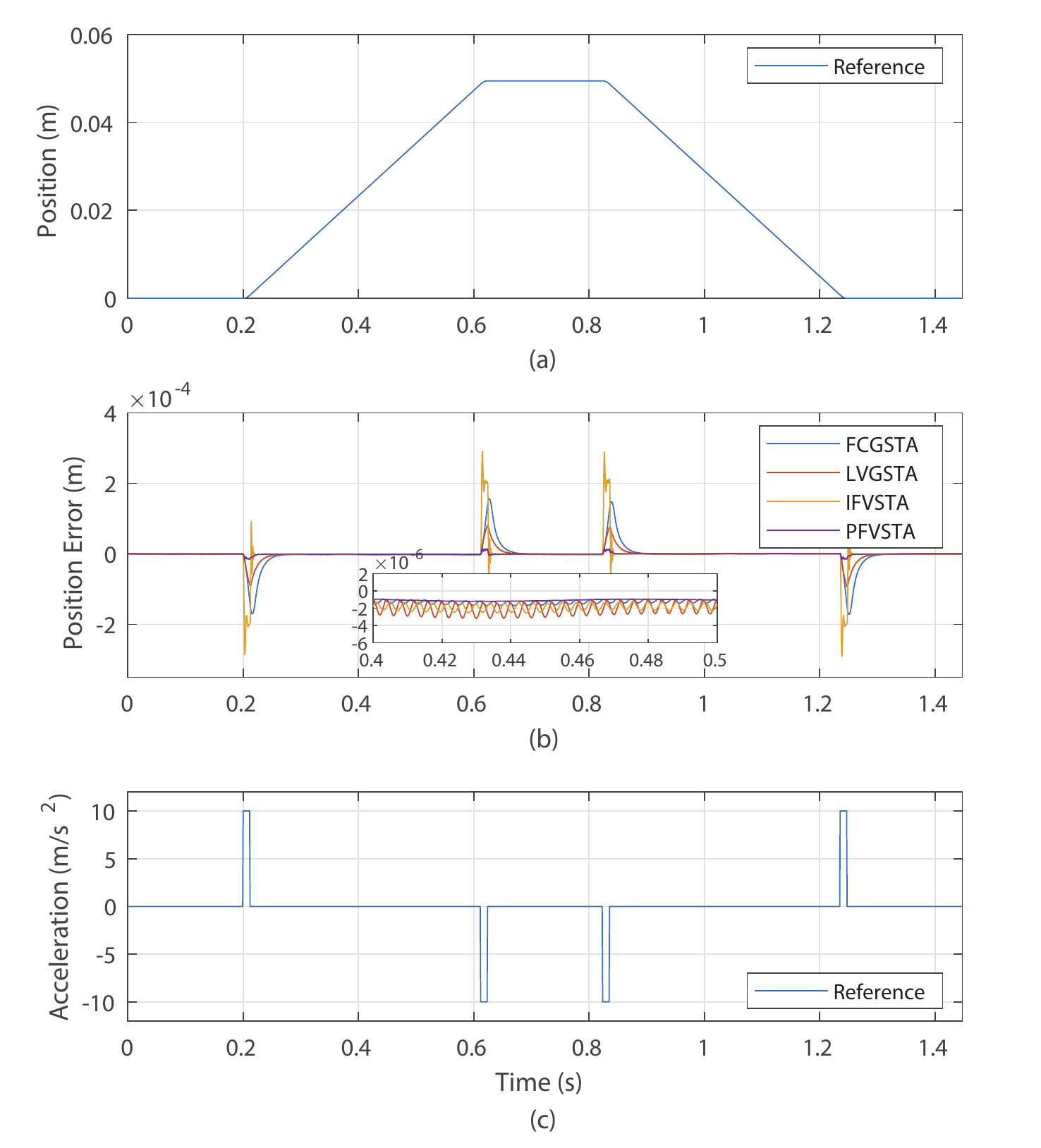}
       \caption{Experimental results of the controllers in Case~2.}
       \label{fig:ex_case2}
       \end{figure}

\subsection{Quantitative Analysis}

     \begin{figure}[http]
      \centering
       \includegraphics[width=230pt]{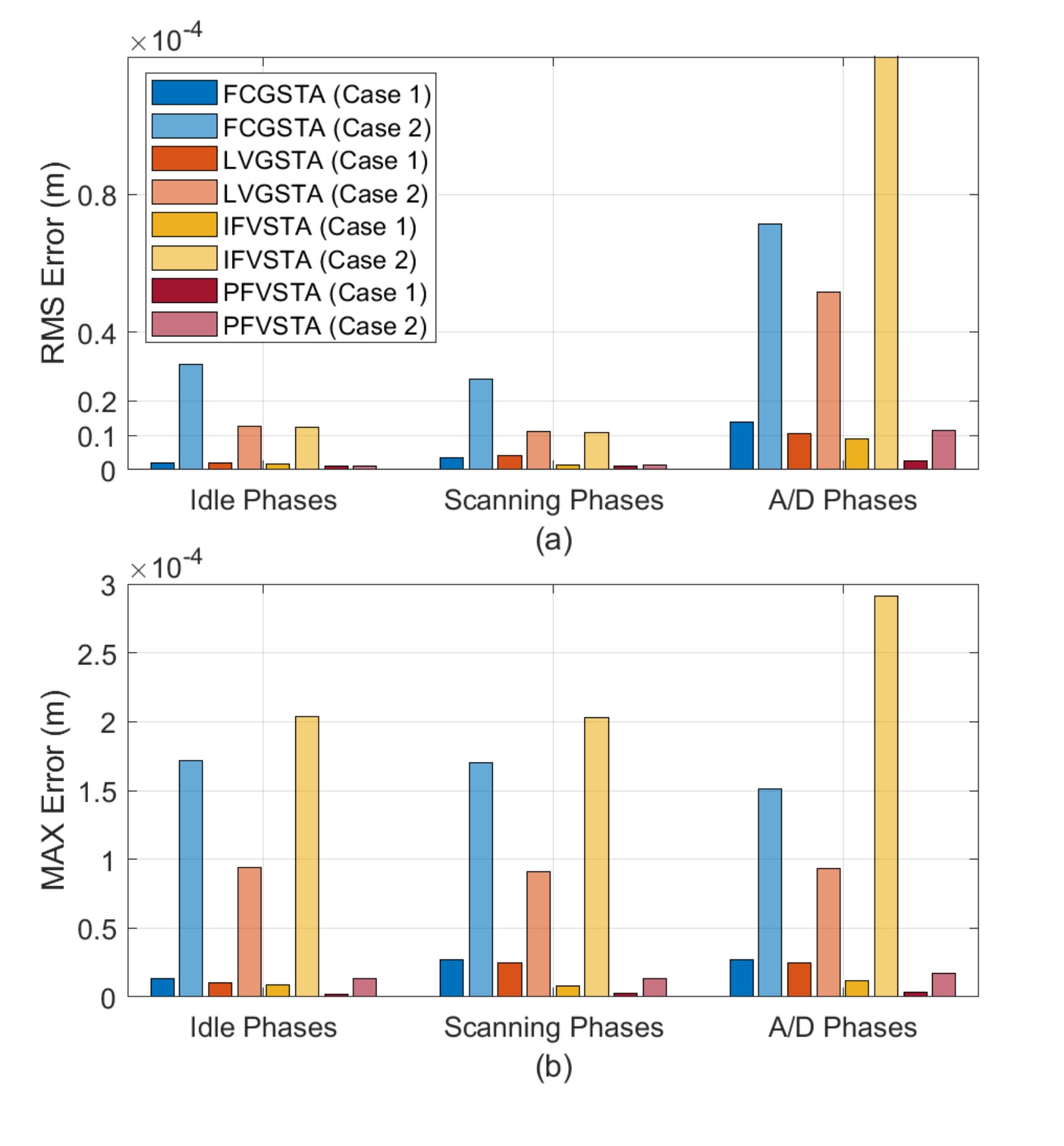}
       \caption{Statistical characteristics of different controllers under the two cases: (a) the RMS Errors, and (b) the MAX Errors. The RMS Error of IFVSTA in Case 2 is $1.964 \times 10^{-4}$ m}
       \label{fig:ex_error_analysis}
       \end{figure}
       

Quantitative analysis of the experiments is conducted to investigate the results thoroughly. Because the root-mean-square (RMS) error and the maximum (MAX) error are two critical indicators to describe the precision, we figure them out in each kind of phases and display the bar charts in Fig.~\ref{fig:ex_error_analysis}. The RMS error is calculated based on
$
    RMS~Error=\sqrt{\frac{\sum_{i=1}^{n}e(i)}{n}}
$
and the MAX error is based on
$
    MAX~Error=\max {|e(i)|}, i=1,2,...,n,
$
where $e(i)$ is the sampled position error, $i$ denotes the serial number and $n$ is the total number of the sampled errors.

Fig.~\ref{fig:ex_error_analysis} (a) shows that of all the three phases, the PFVSTA holds the smallest RMS error in both Case 1 and Case 2. Comparing FCGSTA with LVGSTA, we can see that when the acceleration is small in Case 1, differences between errors of FCGSTA and LVGSTA are not as much as those when the acceleration is large in Case 2. We can attribute the reason for the effects of the variable gains. Besides, for the scanning phases, which we care about the most, the proposed method's RMS error changes much less between Case 1 and Case 2 than the other methods. Due to the combination of the variable gain method and FSS, the performance of the proposed PFVSTA tends to be less sensitive to the accelerations in the reference signal. The contrast between IFVSTA and PFVSTA proves that our proposed method's feedforward term is essential to reduce the tracking error.

All the maximum errors are depicted in Fig.~\ref{fig:ex_error_analysis}~(b). Not surprisingly, the PFVSTA still keeps the best precision according to the statistics. An interesting phenomenon is that the controllers' maximum errors in scanning phases are similar to those in the A/D phases. Checking back the error trajectories in Fig.~\ref{fig:ex_case1}~(b) and Fig.~\ref{fig:ex_case2} (b), we can figure out the reason is that the maximum errors always happen at the junctions of different phases. In this way, the bad performance in the A/D phases negatively influences the performance in the scanning phases. This proves that investigating the performance in A/D phases is also important to improve the precision of wafer stages. 

\begin{figure}[http]
      \centering
       \includegraphics[width=230pt]{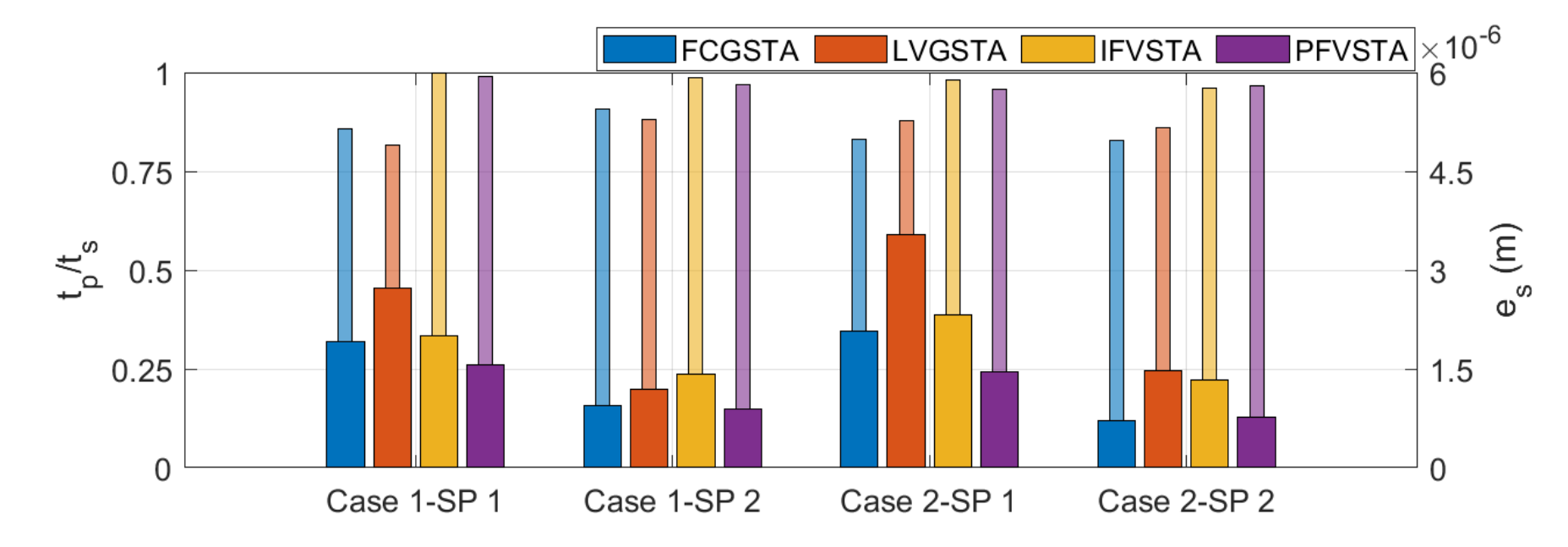}
       \caption{The ratios between the valid scanning time and the whole scanning time (shown as narrow bars), and the valid scanning errors (shown as wide bars) of different controllers.}
       \label{fig:error_ana_scanning}
       \end{figure}

Considering that the precision of the scanning phases is the most significant in the industrial application and that there is an obvious decline of the position error when the scanning phases begin, a possible approach is to let the machine begin to scan after the error drops to a satisfying value to guarantee the precision.
We define the time of the whole scanning phase as $t_s$. After the position error $e$ enters into a certain range, it never escapes from it. Then half of the minimum width of the range is defined as the valid scanning error $e_s$, and the time it keeps in the range is defined as valid scanning time $t_p$. Based on these definitions, it is apparent that a smaller value of $e_s$ and a larger value of $t_p/t_s$ means better performance with higher precision and efficiency. The statistics of $e_s$ and $t_p/t_s$ are demonstrated as bar charts in Fig.~\ref{fig:error_ana_scanning}. It is clear that in the four scanning phases of two cases, both IFVSTA and PFVSTA have the largest values of $t_p/t_s$, which is very close to 1. Furthermore, PFVSTA has the smallest $e_s$ under almost all the situations. This result proves the overall advantage of the proposed wafer stage control method once more.



\section{Conclusion}  \label{sec:conclusion}
In this paper, a fractional-order variable-gain super-twisting controller was proposed for the wafer stage system, with a novel sliding surface and a novel variable-gain structure of the super-twisting algorithm. Specifically,  we considered the practical situation during the analysis of stability and analyzed the controller's theoretical precision. These theoretical analyses explained the advantages of the proposed method. Moreover, simulation and experimental results of the sliding surfaces and the whole control algorithm not only correlated well with the theoretical analyses but also proved the proposed method had at least three merits: the high precision, the robustness to uncertainties, and the small chattering amplitude, which guarantee the excellent performance on wafer stages. Finally, adequate experimental results and their quantitative analyses proved the superiority of the proposed algorithm with application to wafer stages.

\appendices
\section{}
\begin{mydef} (see \cite{podlubny1998fractional}) \label{def: fractional-order calculus}
  The Riemann-Liouville definition of the $\xi$th order derivative and integration for function $f(t)$ are respectively defined as
  \begin{align}
  D^{\xi}f(t)=\frac{1}{\Gamma(m-\xi)}\frac{d^m}{dt^m}\int^t_{t_0}\frac{f(\tau)}{(t-\tau)^{\xi-m+1}}d\tau
  \end{align}
  \begin{align}
    {}_{t_0}I^{\xi}_{t}f(t)=\frac{1}{\Gamma(\xi)}\int^t_{t_0}\frac{f(\tau)}{(t-\tau)^{1-\xi}}d\tau
  \end{align}
  where $m=\lceil \xi \rceil$, $\Gamma(\bullet)$ is the Gamma function, and it is computed as $\Gamma(x)=\int_{0}^{\infty}e^{-t}t^{x-1}dt$. 
\end{mydef}




\ifCLASSOPTIONcaptionsoff
  \newpage
\fi



\bibliographystyle{IEEEtran}

\bibliography{paper10}
\end{document}